\documentclass[10pt,conference]{IEEEtran}
\IEEEoverridecommandlockouts
\usepackage{cite}           
\usepackage{amsmath,amssymb,amsfonts}
\usepackage[ruled,vlined,linesnumbered]{algorithm2e}  
\usepackage{graphicx}
\usepackage{textcomp}
\usepackage{xcolor}
\usepackage{amsthm}
\usepackage{mathtools}
\usepackage{bm}
\usepackage{microtype}
\usepackage{booktabs}
\usepackage{array}
\usepackage{tabularx}
\usepackage{multirow}
\usepackage[caption=false,font=footnotesize]{subfig}
\usepackage{tikz}
\usetikzlibrary{arrows.meta,calc,backgrounds,decorations.markings,decorations.pathreplacing,positioning,fit}
\usepackage{quantikz}
\usepackage{braket}
\usepackage{pgfplots}
\pgfplotsset{compat=1.15}
\usepackage{placeins}
\PassOptionsToPackage{hyphens}{url}
\definecolor{linkblue}{RGB}{0,80,160}
\usepackage{hyperref}
\hypersetup{
  colorlinks=true,
  linkcolor=linkblue,
  citecolor=linkblue,
  urlcolor=linkblue,
  filecolor=linkblue,
  breaklinks=true,
  pdfborder={0 0 0},
  pdftitle={Ancilla-Efficient QSample Preparation for Reversible Markov Chains},
  pdfauthor={Nicholas Zhao},
  pdfsubject={Quantum Computing},
  pdfkeywords={QSample, Markov chains, Quantum simulated annealing, Quantum walk, FPAA, GQSP}
}
\usepackage{bookmark}  
\definecolor{currentgreen}{RGB}{34,139,94}
\definecolor{proposedorange}{RGB}{210,105,30}
\definecolor{lightgreen}{RGB}{232,248,242}
\definecolor{lightorange}{RGB}{255,243,230}
\definecolor{edgegray}{RGB}{170,170,170}
\definecolor{arrowgray}{RGB}{90,90,90}
\definecolor{structurefg}{RGB}{70,130,180}
\definecolor{szeggreen}{RGB}{76,175,80}
\definecolor{qpdred}{RGB}{183,28,28}
\definecolor{phaseblue}{RGB}{30,136,229}
\definecolor{fpaaamber}{RGB}{255,179,0}
\theoremstyle{plain}
\newtheorem{theorem}{Theorem}[section]
\newtheorem{lemma}[theorem]{Lemma}
\newtheorem{proposition}[theorem]{Proposition}
\newtheorem{corollary}[theorem]{Corollary}
\newtheorem{definition}[theorem]{Definition}

\theoremstyle{remark}
\newtheorem{remark}[theorem]{Remark}

\renewcommand{\braket}[2]{\Braket{#1 | #2}}
\newcommand{\TV}{d_{\mathrm{TV}}}

\def\BibTeX{{\rm B\kern-.05em{\sc i\kern-.025em b}\kern-.08em
    T\kern-.1667em\lower.7ex\hbox{E}\kern-.125emX}}
\begin{document}

\title{Ancilla-Efficient \textsc{QSample} Preparation\\ for Reversible Markov Chains}

\author{\IEEEauthorblockN{Nicholas Zhao}
\IEEEauthorblockA{\textit{Department of Electrical and Electronic Engineering}\\
\textit{Imperial College London}\\
London, United Kingdom\\
nz422@ic.ac.uk}
}

\maketitle

\begin{abstract}
Preparing quantum samples (\textsc{QSamples})---coherent encodings of stationary distributions of reversible Markov chains---is a fundamental primitive in quantum sampling, particularly for quantum simulated annealing. A central limitation of existing phase-estimation-based frameworks is the ancilla qubit overhead. In this work, we present a new end-to-end framework requiring only one ancilla qubit in the working register. The key technical ingredient is a selective phase compiler circuit using one ancilla qubit, built from a generalized quantum signal processing (GQSP)-based projector onto the $1$-eigenspace of the qubitized Szegedy walk. Embedding these selective phase compilers into the fixed-point amplitude amplification (FPAA) procedure and iterating yields a quantum algorithm that, given an initial state, oracle access, lower bounds on the overlaps between adjacent states, and lower bounds on the phase gaps, outputs a \textsc{QSample} within any desired trace distance (and thus total variation distance). The query complexity scales inversely with the square roots of both the minimum overlap and the minimum spectral gap of the Markov chains across the cooling schedule, up to polylogarithmic factors. We numerically validate the selective phase compiler and the accumulations of error throughout the algorithm. These results establish two improvements over the previous framework by Wocjan and Abeyesinghe. First, our major contribution is a reduction in the ancilla qubits in the working register to only one. Second, by inserting our GQSP-based selective phase compiler into the FPAA procedure, we improve the \textsc{QSample} transport overlap dependence from \(1/p_{\min}\) to \(1/\sqrt{p_{\min}}\), relative to their Grover-\(\pi/3\) fixed-point method; when the overlap is constant, however, this improvement is only by a constant factor. Finally, as a direct application we run the quantum algorithm to prepare a Gibbs \textsc{QSample} and obtain a reduction in the qubit complexity.
\end{abstract}

\begin{IEEEkeywords}
\textsc{QSample}, Markov chains, Quantum simulated annealing, Quantum walk, Fixed-point amplitude amplification, Generalized quantum signal processing
\end{IEEEkeywords}


\section{Introduction}
\label{sec:introduction}
 Many quantum sampling and estimation algorithms use quantum encodings of the probability distributions as input states. This is due to the amount of rich information encoded across the superposition of the coherent quantum sample, which quantum algorithms composed of unitaries can perform useful manipulations on. Let \(\pi\) be a probability distribution on \(\Omega\), then a \textsc{QSample} is a quantum state whose amplitudes are square roots of the target distribution
\begin{equation}
\ket{\pi}
=
\sum_{x\in\Omega}\sqrt{\pi(x)}\ket{x}.
\end{equation}
Measuring \(\ket{\pi}\) in the computational basis $\Pi_x:=\ket{x}\bra{x}$ outputs a classical sample from \(\pi(x)\), while an undisturbed fully coherent quantum state \(\ket{\pi}\) allows the probability distributions to be used inside quantum procedures such as amplitude amplification \cite{Brassard_2002}, amplitude estimation \cite{montanaro_quantum_2015}, and mean estimation \cite{Cornelissen_2022}. Notably, Wocjan et al. \cite{wocjan_speed-up_2008} proposed a quantum sampling framework which is based on an iteration over the framework put forward by Somma et al. \cite{Somma_2007} termed Quantum Simulated Annealing (QSA). The framework constructs the Szegedy quantum walk operator \cite{Szegedy2004} associated with a reversible Markov chain on two registers $\mathbb{C}^{|\Omega|}\otimes \mathbb{C}^{|\Omega|}$, and uses quantum phase estimation (QPE) to implement a reflection across the quantum stationary state \(\ket{\pi}\) (also in \cite{Magniez_2011}). Since this operation is governed by the phase gap \(\Delta=\Omega(\sqrt{\delta(\mathcal{P})})\), where \(\delta(\mathcal{P})\) is the spectral gap of the underlying Markov chain, thus the reflection cost scales as \(\widetilde{\mathcal{O}}(1/\sqrt{\delta(\mathcal{P})})\). These reflections are then used to anneal through a sequence of Markov chains \(\mathcal{P}_0,\ldots,\mathcal{P}_{\ell}\) until preparing the target \textsc{QSample} \(\ket{\pi_{\ell}}\) within trace distance \(\varepsilon\). Even though a general quantum speedup for Markov chain Monte Carlo (MCMC) methods has been conjectured to be computationally infeasible due to \(\mathsf{SZK} \subseteq \mathsf{BQP}\) \cite{AharonovTaShma2003}, \cite{aaronson2001quantumlowerboundcollision}. Despite this the framework of \cite{wocjan_speed-up_2008} still enjoyed much success and wide adoption, particularly in research relating to partition function estimation \cite{montanaro_quantum_2015, SVV, cornelissen_sublinear-time_2023, Arunachalam_2022}, Gibbs state preparation \cite{wocjan_quantum_2009}, and Bayesian inference \cite{Harrow_2020}. However, this framework suffers from two shortcomings. Firstly, the number of ancilla qubits required for a QPE-based reflection scales as $\mathcal{O}(\log(1/\sqrt{\delta(\mathcal{P})})\log(\ell/\varepsilon))$. Secondly, the Grover-\(\pi/3\) fixed-point search incurs a quadratically larger overlap dependence \(1/p_i\) whereas the Grover optimum \(1/\sqrt{p_i}\) ~\cite{grover1996fastquantummechanicalalgorithm}   can be achieved by using fixed-point amplitude amplification (FPAA) \cite{Yoder_2014}. Recent developments in generalized quantum signal processing (GQSP) \cite{MotlaghWiebe2024GQSP} and eigenspace-reflection algorithms \cite{claudon2025simplealgorithmreflecteigenspaces} also provided us with tools for implementing polynomial transformation on our desired unitaries with reduced ancilla qubit overhead. 

In this work, we seek to address these shortcomings by proposing a more ancilla-efficient \textsc{QSample} preparation scheme using the aforementioned tools.

\textbf{\textit{Contributions}:} We present a new end-to-end framework for preparing the \textsc{QSample}, replacing the traditional QPE-based reflection procedure with a selective phase quantum circuit, built from a generalized quantum signal processing (GQSP)-based method using only one ancilla qubit in the working register \cite{MotlaghWiebe2024GQSP, claudon2025simplealgorithmreflecteigenspaces}. Next, we construct the selective phase compilers and bound the operator norm of their deviation from the ideal selective phase operator, and then rigorously bound the trace distance when fed into each stage of the FPAA subroutine. We then assemble our new end-to-end quantum algorithm, which produces a \textsc{QSample} within any desired trace distance with query and qubit complexity analysed. We then numerically validate the approximate selective phase compiler and analyse the accumulated errors shown in Figure \ref{fig:numerical-validation}. Finally, we apply our framework to prepare a \textsc{QSample} of the Gibbs state, and by combining it with the schedule-length bound from \cite{Arunachalam_2022}, we obtain an ancilla efficient preparation scheme for the Gibbs \textsc{QSample}.


\section{Preliminaries}
\label{sec:preliminaries}
Let $\mathcal{P}$ be a reversible ergodic Markov chain on a finite state space $\Omega$, with transition matrix $P$, stationary distribution $\pi$, and spectral gap $\delta(\mathcal{P}) := 1-\lambda_2(P)$, where $\lambda_2(P)$ is the second-largest eigenvalue. We define the quantum sample as
\begin{equation}
\ket{\pi}:=\sum_{x\in\Omega}\sqrt{\pi(x)}\ket{x}\in\mathcal{H}_\pi,
\end{equation}
where $\mathcal{H}_\pi:=\operatorname{span}\{\ket{x}:x\in\Omega\} \cong\mathbb{C}^{|\Omega|}$, and denote the adjacent squared overlaps as $p_i:=|\braket{\pi_i}{\pi_{i+1}}|^2$. Let us define a partial isometry $\mathcal{V}$ and the Projected Unitary Encoding (PUE):
\begin{definition}[Partial isometry]
Given Hilbert spaces $\mathcal{H}_1 \cong \mathbb{C}^{n_1}$ and $\mathcal{H}_2 \cong \mathbb{C}^{n_2}$ with different dimensions, where $n_1 < n_2$ and $n_1,n_2 \in \mathbb{N}$, then a partial isometry is a mapping
\begin{equation}
\mathcal{V} : \mathcal{H}_1 \to \mathcal{H}_2,
\end{equation}
such that $\mathcal{V}^\dagger \mathcal{V}$ is the projection onto the support of $\mathcal{V}$.
\end{definition}

\begin{definition}[Projected Unitary Encodings]
\label{def:pue-spue}
Let $U$ be unitary and let $\mathcal{V}_L,\mathcal{V}_R$ be partial isometries, then $(U,\mathcal{V}_L,\mathcal{V}_R)$ is a projected unitary encoding (PUE) of $A:=\mathcal{V}_L^\dagger U\mathcal{V}_R$. If $\mathcal{V}_L=\mathcal{V}_R=\mathcal{V}$ and $U=U^\dagger$, then $(U,\mathcal{V})$ is a symmetric projected unitary encoding (SPUE) of $A:=\mathcal{V}^\dagger U\mathcal{V}$.
\end{definition}
Let $\ket{P(x,\cdot)}:=\sum_{y\in\Omega}\sqrt{P(x,y)}\ket{y}$ denote the superposition of all the possible transitions, let the quantum step isometry be a mapping $\mathcal{T}:\ket{x}\mapsto \ket{x}\ket{P(x,\cdot)}
$ and the swap operator $S:\mathcal{H}_A\otimes\mathcal{H}_B\longrightarrow\mathcal{H}_B\otimes\mathcal{H}_A$ is defined as $S\ket{x}\ket{y}
=
\ket{y}\ket{x},$ 
then we can define the quantum walk in the qubitization formalism following \cite{claudon2025quantumcircuitsmetropolishastingsalgorithm}:

\begin{definition}[Qubitized Szegedy walk]
The qubitized Szegedy walk associated with \(\mathcal{P}\) is
\begin{equation}
\label{eq:szegedy_qubitzed_walk}
\mathcal{W}
:=
(2\mathcal{T}\mathcal{T}^\dagger-\mathbb{I})S.
\end{equation}
\end{definition}
The quantum sample lifted into the Szegedy-walk space is
\begin{equation}\label{eq:unique_szegedy_projector}
\mathcal{T}\ket{\pi}=\sum_{x,y\in\Omega}\sqrt{\pi(x)P(x,y)}\,\ket{x,y}.
\end{equation}
Thus, $\mathcal{T}\ket{\pi}\in\mathcal{H}_{\mathcal{W}}:=\mathcal{H}_A\otimes\mathcal{H}_B\cong\mathbb{C}^{|\Omega|}\otimes\mathbb{C}^{|\Omega|}$, the Hilbert space on which $\mathcal{W}$ acts. Let us define the quantum oracle as
\begin{definition}[Quantum Oracle]
 Let \(O_{\mathcal{P}}\) be a unitary oracle satisfying
\begin{equation}
O_{\mathcal{P}}:\ket{x}\ket{0}
\mapsto
\sum_{y\in\Omega}
\sqrt{P(x,y)}
\ket{x}\ket{y}.
\end{equation}
A query denotes a use of \(O_{\mathcal{P}}\) or \(O_{\mathcal{P}}^{\dagger}\).
\end{definition}
Let \(n_\Omega:=\lceil\log_2|\Omega|\rceil\) denote the number of qubits for encoding the state space of Markov chains, let \(s\) denote the
signal qubit, and let \(A\) and \(B\) denote the \(n_\Omega\)-qubit system and transition registers, respectively. We define the embedding:
\begin{equation}
\mathcal J:\mathcal H_A
\longrightarrow
\mathcal H_s\otimes\mathcal H_A\otimes\mathcal H_B,
\label{eq:clean-embedding}
\end{equation}
where for every \(\ket{v}_A\in\mathcal H_A\),
\begin{equation}
	\mathcal J\ket{v}_A
:=
\ket0_s\ket{v}_A\ket{0}^{\otimes n_\Omega}_B ,
\end{equation}
 In particular,
\begin{equation}
(\mathbb I_s\otimes O_{\mathcal P})\mathcal J\ket{v}_A
=
\ket0_s\mathcal T\ket{v}_A .
\label{eq:clean-to-walk}
\end{equation}
If we define $(O_{\mathcal{P}},1,\ket{0})$ as a PUE of the quantum step isometry $\mathcal{T}$, then we can define \eqref{eq:szegedy_qubitzed_walk} in terms of the implemented unitary 
\begin{equation}
\mathcal{W}
=
\bigl(2\mathcal{T}\mathcal{T}^\dagger-\mathbb{I}\bigr)S
=
\Bigl[\,O_{\mathcal{P}}\bigl(\mathbb{I}\otimes(2\ket{0}\bra{0}-\mathbb{I})\bigr)O_{\mathcal{P}}^{\dagger}\,\Bigr]S,
\end{equation}
this elucidates that one application of \(\mathcal{W}\) uses one query to \(O_{\mathcal{P}}\), one query to \(O_{\mathcal{P}}^{\dagger}\), and one swap operator \(S\).

\begin{definition}[Phase gap]
Let \(\mathcal{W}\) be the qubitized Szegedy walk associated with \(\mathcal{P}\). Its phase gap is
\begin{equation}
\Delta_{\mathcal{W}}
:=
\min\{|\theta|:\theta\neq 0,\ e^{i\theta}\in\mathrm{spec}(\mathcal{W}),\ \theta\in(-\pi,\pi]\}.
\end{equation}
\end{definition}
It follows that
\begin{align}
\Delta_{\mathcal{W}}
&= \cos^{-1}(\lambda_2(P)) \nonumber\\
&\in \Omega\left(\sqrt{\delta(\mathcal{P})}\right).
\end{align}
Let $\Pi_0$ denote the orthogonal projector onto the \(1\)-eigenspace of \(\mathcal{W}\), i.e.
$\operatorname{Im}(\Pi_0)=\ker(\mathcal{W}-\mathbb{I}).$ Since such eigenspace can contain many non-degenerate eigenstates corresponding to the same eigenvalues, we use the following lemma:

\begin{lemma}
\label{lem:stationary-intertwining}
Let $\Pi_0$ denote the orthogonal projector onto
$\ker(\mathcal W-\mathbb I)$, and let
$\Pi_\pi:=\ket{\pi}\bra{\pi}$ be the projector onto stationary distribution eigenstate $\ket{\pi}$. For a reversible ergodic Markov chain,
\begin{equation}
\Pi_0\mathcal T=\mathcal T\Pi_\pi.
\label{eq:projector-intertwining}
\end{equation}
\end{lemma}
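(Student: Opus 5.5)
The plan is to prove the identity by a block decomposition of $\mathcal W$ in the standard Szegedy/Jordan style. Set $K:=\mathcal T\mathcal H_A+S\mathcal T\mathcal H_A\subseteq\mathcal H_{\mathcal W}$. The goal is to show that
\begin{equation*}
\ker(\mathcal W-\mathbb I)=\operatorname{span}\{\mathcal T\ket{\pi}\}\oplus\bigl(\ker(\mathcal W-\mathbb I)\cap K^\perp\bigr).
\end{equation*}
Given this, $\Pi_0$ is the sum of two orthogonal projectors. The first is $\mathcal T\ket{\pi}\bra{\pi}\mathcal T^\dagger$; it is rank one and normalised because $\mathcal T$ is an isometry ($\mathcal T^\dagger\mathcal T=\mathbb I$). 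The second projects onto a subspace of $K^\perp$, so it annihilates $\mathcal T\mathcal H_A\subseteq K$. Hence $\Pi_0\mathcal T\ket v=\mathcal T\ket{\pi}\braket{\pi}{v}$ for every $\ket v\in\mathcal H_A$, which is exactly \eqref{eq:projector-intertwining}.

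\textbf{Invariance and reduction to $K$.} First I show that $K$ is $\mathcal W$-invariant. We have $\mathcal W S\mathcal T\ket v=(2\mathcal T\mathcal T^\dagger-\mathbb I)\mathcal T\ket v=\mathcal T\ket v\in K$, using $S^2=\mathbb I$. We also have $\mathcal W\mathcal T\ket v=2\mathcal T D\ket v-S\mathcal T\ket v\in K$, where $D:=\mathcal T^\dagger S\mathcal T$ is the discriminant. Since $\mathcal W$ is unitary, $K^\perp$ is invariant too. Therefore $\ker(\mathcal W-\mathbb I)$ splits as its intersection with $K$ plus its intersection with $K^\perp$, and only the part inside $K$ remains to be identified. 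Nothing needs to be computed on $K^\perp$; this is precisely why the lemma is phrased as an intertwining relation rather than as $\Pi_0=\mathcal T\Pi_\pi\mathcal T^\dagger$. Indeed, on $K^\perp$ we have $\mathcal W=-S$, which can have eigenvalue $1$ on antisymmetric vectors.

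\textbf{Spectral analysis on $K$.} The matrix elements of $D$ are $D_{xy}=\sqrt{P(x,y)P(y,x)}$. By reversibility, $D=\operatorname{diag}(\sqrt\pi)\,P\,\operatorname{diag}(\sqrt\pi)^{-1}$, so $D$ is real symmetric with the same spectrum as $P$. I take an orthonormal eigenbasis $\ket{v_k}$ of $D$ with eigenvalues $\lambda_k$. Ergodicity gives a simple top eigenvalue $\lambda=1$, with eigenvector $\ket{v_1}=\ket{\pi}$.

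For each $k$, let $K_k:=\operatorname{span}\{\mathcal T\ket{v_k},S\mathcal T\ket{v_k}\}$. The blocks are mutually orthogonal for distinct $k$, because $\bra{v_j}\mathcal T^\dagger S\mathcal T\ket{v_k}=\lambda_k\delta_{jk}$ and $\mathcal T^\dagger\mathcal T=\mathbb I$, and together they span $K$.

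In the (possibly non-orthogonal) basis $(\mathcal T\ket{v_k},S\mathcal T\ket{v_k})$, the two identities above show that $\mathcal W$ acts as $\begin{pmatrix}2\lambda_k&1\\-1&0\end{pmatrix}$. Its characteristic polynomial is $\mu^2-2\lambda_k\mu+1$, with roots $\mu=e^{\pm i\arccos\lambda_k}$. So $\mu=1$ occurs iff $\lambda_k=1$.

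\textbf{Degenerate blocks (main obstacle).} The step needing care is the degenerate blocks, where $\mathcal T\ket{v_k}$ and $S\mathcal T\ket{v_k}$ are parallel. This happens exactly when $|\lambda_k|=1$, since $\|\mathcal T\ket{v_k}\|=\|S\mathcal T\ket{v_k}\|=1$ and their inner product is $\lambda_k$. In that case the block is one-dimensional and the $2\times2$ computation must be replaced by a direct check.

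For $\lambda_1=1$, detailed balance gives $S\mathcal T\ket{\pi}=\mathcal T\ket{\pi}$. Then $\mathcal W\mathcal T\ket{\pi}=\mathcal T\ket{\pi}$, so $\mathcal T\ket\pi$ lies in the $1$-eigenspace.

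For a possible $\lambda_k=-1$ (periodic case), $S\mathcal T\ket{v_k}=-\mathcal T\ket{v_k}$, and then $\mathcal W\mathcal T\ket{v_k}=-\mathcal T\ket{v_k}$, which has eigenvalue $-1\neq 1$.

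In every nondegenerate block ($|\lambda_k|<1$), $1$ is not an eigenvalue. Because $\ker(\mathcal W-\mathbb I)\cap K$ also splits along the invariant blocks $K_k$, it follows that this intersection equals $\operatorname{span}\{\mathcal T\ket{\pi}\}$. This establishes the displayed decomposition and completes the argument above.
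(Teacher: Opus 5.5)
Your proof is correct and follows the paper's route. Both arguments use the busy subspace $K=\operatorname{Im}\mathcal T+S\operatorname{Im}\mathcal T$, its $\mathcal W$-invariance, and the similarity $D=\operatorname{diag}(\sqrt\pi)\,P\,\operatorname{diag}(\sqrt\pi)^{-1}$, then conclude that every $1$-eigenvector outside $\operatorname{span}\{\mathcal T\ket\pi\}$ lies in $K^\perp$ and so is orthogonal to $\operatorname{Im}\mathcal T$. Your $2\times 2$ block analysis (including the degenerate $|\lambda_k|=1$ blocks) and your splitting of $\ker(\mathcal W-\mathbb I)$ via the invariance of $K^\perp$ supply the justification that the paper compresses into ``analogously'' and an unargued ``so''.
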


\begin{proof}
Define the ``busy" Szegedy walk space
\[
\mathcal K:=\operatorname{Im}(\mathcal T)
+S\operatorname{Im}(\mathcal T),
\]
and the discriminant matrix
$\mathcal D:=\mathcal T^{\dagger}S\mathcal T$. For every $\ket{v}$,
\begin{equation}
\begin{aligned}
\mathcal W\mathcal T\ket{v}
&=(2\mathcal T\mathcal T^{\dagger}-\mathbb I)
  S\mathcal T\ket{v}
 =2\mathcal T\mathcal D\ket{v}-S\mathcal T\ket{v}\\
\mathcal W S\mathcal T\ket{v}
&=(2\mathcal T\mathcal T^{\dagger}-\mathbb I)\mathcal T\ket{v}
 =\mathcal T\ket{v}.
\end{aligned}
\label{eq:busy-space-invariance}
\end{equation}
Hence $\mathcal K$ is invariant under $\mathcal W$ i.e. $\mathcal W\mathcal K=\mathcal K,$ Moreover,
\[
\mathcal D_{xy}
=
\sqrt{P(x,y)P(y,x)}
=
\sqrt{\frac{\pi(x)}{\pi(y)}}\,P(x,y),
\]
where reversibility was used in the second equality. Therefore
\[
\mathcal D
=\operatorname{diag}(\sqrt{\pi})\,P\,
\operatorname{diag}(\sqrt{\pi})^{-1},
\]
so $\mathcal D$ shares the same eigenvalues as $P$. Since $P$ is irreducible,
\[
\ker(\mathcal D-\mathbb I)=\operatorname{span}\{\ket{\pi}\},
\]
analogously, 
\[
\ker(\mathcal W-\mathbb I)\cap\mathcal K
=
\operatorname{span}\{\mathcal T\ket{\pi}\}.
\]
so any remaining $1$-eigenvectors of $\mathcal W$ lie in $\mathcal K^\perp$. Since $\operatorname{Im}(\mathcal T)\subseteq\mathcal K$, they are
orthogonal to every vector $\mathcal T\ket{v}$. Thus for
arbitrary $\ket{v}$,
\[
\begin{aligned}
\Pi_0\mathcal T\ket{v}
&=
\mathcal T\ket{\pi}\bra{\pi}
\mathcal T^\dagger\mathcal T\ket{v} \\
&=
\mathcal T\ket{\pi}\langle\pi|v\rangle \\
&=
\mathcal T\Pi_\pi\ket{v}.
\end{aligned}
\]
This proves
\eqref{eq:projector-intertwining}.
\end{proof}

Hence, we can write the spectral decomposition of \(\mathcal{W}\) as
\[
\mathcal{W}=\Pi_0+\sum_{j\geq 1}e^{i\theta_j}\Pi_j,
\qquad |\theta_j|\geq \Delta_{\mathcal{W}},
\]
where \(\Pi_0\) is the projector associated with eigenvalue \(1\). Throughout the paper we denote the projectors as $\Pi_\pi:=|\pi\rangle\langle\pi|$. The lifted stationary state \(\mathcal T\ket{\pi}\) from \eqref{eq:unique_szegedy_projector} lies in \(\operatorname{Im}(\Pi_0)\). We therefore use \(\Pi_0\) for the spectral projector onto the full \(1\)-eigenspace of \(\mathcal{W}\), where $\Pi_0\mathcal{T}=\mathcal{T}\Pi_\pi.$
For density operators \(\rho,\sigma\) on the same Hilbert space, define
\begin{equation}
D_{\mathrm{tr}}(\rho,\sigma)
:=
\frac{1}{2}\|\rho-\sigma\|_1,
\end{equation}
where \(\|\cdot\|_1\) is the trace norm. For normalized pure states,
we use the shorthand
\begin{equation}
D_{\mathrm{tr}}(\ket{\alpha},\ket{\beta})
:=
D_{\mathrm{tr}}(\ket{\alpha}\bra{\alpha},\ket{\beta}\bra{\beta})
=
\sqrt{1-|\braket{\alpha}{\beta}|^2},
\end{equation}
and the total variation distance (TVD) of the probability distributions $\alpha,\beta$ on $\Omega$ is defined as
\begin{equation}
d_{\mathrm{TV}}(\alpha,\beta)
:=
\frac{1}{2}\sum_{x\in\Omega}|\alpha(x)-\beta(x)|.
\end{equation}
It can be shown that $d_{\mathrm{TV}}(\alpha,\beta)
\le
D_{\mathrm{tr}}(\ket{\alpha},\ket{\beta}).$ Throughout we also denote \(\|\cdot\|\) as the operator norm.


\section{Selective Phase Compiler}
\label{sec:selective-phase}

This is the central ingredient of our framework, where we construct a quantum circuit that implements an approximate selective phase rotation about the target eigenspace of a unitary. In our case, the relevant stationary state is \(\mathcal T\ket{\pi}\) in the invariant Szegedy subspace. We intentionally use the selective phase operator, which is the generalization of a reflection operator (with reflection recovered at $\phi = \pi$), because the additional degree of freedom in phase allows us to run FPAA with optimal query complexity \cite{Yoder_2014}, unlike the restrictive $\pi$-reflections. We begin by defining the selective phase operator.

\begin{definition}[Selective phase operator]
\label{def:selective_phase_operator}
Let $\mathcal{H}$ be a Hilbert space, and let $\Pi$ be an orthogonal projector onto a target subspace of $\mathcal{H}$. Then for any phase $\phi \in \mathbb{R}$, the selective phase operator associated with $\Pi$ is
\begin{equation}
\mathcal{S}_\phi(\Pi) := \mathbb{I} + (e^{i\phi}-1)\Pi.
\end{equation}
\end{definition}

\begin{remark}
For the special value $\phi = 0$, the operator reduces to the identity, $\mathcal{S}_0(\Pi) = \mathbb{I}$. For $\phi = \pi$, it reduces to a reflection, $\mathcal{S}_\pi(\Pi) = \mathbb{I} - 2\Pi$.
\end{remark}

This operator applies $e^{i\phi}$ on the target subspace and acts as the identity on its orthogonal complement. However, in general $\mathcal{S}_\phi(\Pi)$ is not given to us directly because the projector is generally hard to compute. We instead build an efficient selective phase compiler to do this, which we build in three steps. First, we use the GQSP framework~\cite{MotlaghWiebe2024GQSP,claudon2025simplealgorithmreflecteigenspaces} to construct a unitary $\mathcal{C}_{\Delta,\varepsilon}(\mathcal{W})$ whose PUE implements $\Upsilon(\mathcal{W})$, which approximates the zero-eigenphase projector \(\Pi_0\) using only one ancilla qubit. Second, with respect to an error bound $\varepsilon^{\mathcal{W}}$, we construct the common-interface selective-phase compiler
\begin{equation}
\label{eq:selective-phase-construction}
\begin{aligned}
\widetilde{\mathcal{S}}_\phi(\Pi_0):=
{}&(\mathbb I_s\otimes O_{\mathcal P}^{\dagger})
\mathcal{C}_{\Delta_{\mathcal W},\varepsilon^{\mathcal W}}(\mathcal W)^\dagger\\
&\times(P_\phi\otimes\mathbb I_{AB})
\mathcal{C}_{\Delta_{\mathcal W},\varepsilon^{\mathcal W}}(\mathcal W)
(\mathbb I_s\otimes O_{\mathcal P}).
\end{aligned}
\end{equation}
where $P_\phi = e^{i\phi}\ket{0}\bra{0} + \ket{1}\bra{1}$ applies a phase to the ancilla-$\ket{0}$ branch. We analyse the approximation by the induced \(2\)-norm (operator norm) to obtain:
\begin{equation}
\left\|
\widetilde{\mathcal{S}}_\phi(\Pi_0)\mathcal J
-
\mathcal J\mathcal{S}_\phi(\Pi_\pi)
\right\|
\leq
2\varepsilon^{\mathcal W}.
\end{equation}
We now instantiate the GQSP/eigenspace-reflection construction from~\cite{MotlaghWiebe2024GQSP,claudon2025simplealgorithmreflecteigenspaces} to \(\mathcal{W}\) to obtain the following result:

\begin{corollary}[GQSP-Projector for $\mathcal{W}$]
\label{cor:w-spectral-filter}
Given the qubitized Szegedy walk $\mathcal{W}$ with phase gap $\Delta_{\mathcal{W}}$, let $\Pi_0$ be the orthogonal projector onto the $1$-eigenspace of $\mathcal{W}$. For every $\varepsilon^{\mathcal{W}} \in (0,1)$, there exist degree-$d$ polynomials $\Upsilon,\Phi\in\mathbb{C}[x]$ and a unitary $\mathcal{C}_{\Delta,\varepsilon}(\mathcal{W})$ whose block isometry satisfies
\[
\mathcal{C}_{\Delta,\varepsilon}(\mathcal{W})
(\ket{0}_s\!\otimes\!\mathbb{I}_{AB})
=
\ket{0}_s\!\otimes\!\Upsilon(\mathcal{W})
+
\ket{1}_s\!\otimes\!\Phi(\mathcal{W}).
\]
where \(\left\|\Upsilon(\mathcal{W})-\Pi_0\right\|\leq\varepsilon^{\mathcal{W}}\) and \(\Upsilon(\mathcal{W})^\dagger\Upsilon(\mathcal{W})+\Phi(\mathcal{W})^\dagger\Phi(\mathcal{W})=\mathbb{I}\). The construction uses degree \(d\):
\begin{equation}
d=\mathcal{O}\left(\frac{1}{\Delta_{\mathcal{W}}}\log\frac{1}{\varepsilon^{\mathcal{W}}}\right)
\end{equation}
controlled-$\mathcal{W}$ operators, single-qubit gates, and one ancilla qubit.
\end{corollary}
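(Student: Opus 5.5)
The plan is to build the filter at the level of scalar functions on the unit circle and then lift it to $\mathcal{W}$ by the spectral theorem, with GQSP turning it into a circuit. Since $\mathcal{W}$ is unitary, it has the decomposition $\mathcal{W}=\Pi_0+\sum_{j\ge1}e^{i\theta_j}\Pi_j$ with $|\theta_j|\ge\Delta_{\mathcal{W}}$. For any polynomial $\Upsilon\in\mathbb{C}[x]$ this gives
\[
\Upsilon(\mathcal{W})=\Upsilon(1)\Pi_0+\sum_{j\ge1}\Upsilon(e^{i\theta_j})\Pi_j .
\]
The projectors are mutually orthogonal, so
\[
\|\Upsilon(\mathcal{W})-\Pi_0\|=\max\Bigl\{|\Upsilon(1)-1|,\ \max_{j\ge 1}|\Upsilon(e^{i\theta_j})|\Bigr\}.
\]
It therefore suffices to find a polynomial $\Upsilon$ satisfying three conditions:
\begin{enumerate}
\item $|\Upsilon(e^{i\theta})|\le1$ on the whole circle;
\item $|\Upsilon(1)-1|\le\varepsilon^{\mathcal{W}}$;
\item $|\Upsilon(e^{i\theta})|\le\varepsilon^{\mathcal{W}}$ whenever $\Delta_{\mathcal{W}}\le|\theta|\le\pi$.
\end{enumerate}

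For the scalar construction I would follow \cite{claudon2025simplealgorithmreflecteigenspaces}. Start from a real even trigonometric "window" $f(\theta)$ approximating the indicator of $|\theta|<\Delta_{\mathcal{W}}/2$, for instance a Gaussian or erf-type bump of width $\sim\Delta_{\mathcal{W}}$. Truncate its Fourier (Chebyshev/Jacobi--Anger) series at order $k=\mathcal{O}(\Delta_{\mathcal{W}}^{-1}\log(1/\varepsilon^{\mathcal{W}}))$. Then rescale by $(1-\varepsilon^{\mathcal{W}}/2)$ so that the sup norm is at most $1$.

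This produces a Laurent polynomial $g(e^{i\theta})=\sum_{|m|\le k}c_m e^{im\theta}$. Multiplying by $e^{ik\theta}$ gives an honest polynomial of degree $d=2k$. The resulting phase $e^{ik\theta}$ is harmless away from $\theta=0$, and it equals $1$ at $\theta=0$.

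A simpler alternative I would fall back on is
\[
\Upsilon(x)=\Bigl(\tfrac{1+x}{2}\Bigr)^{a}\Bigl(\tfrac{1+\bar{\cdot}}{2}\Bigr)\text{-type products},
\]
or concretely $|\cos(\theta/2)|^{2a}$ expressed as $x^{-a}((1+x)/2)^{2a}$ and shifted by $x^{a}$. Here $\cos^{2a}(\Delta_{\mathcal{W}}/2)\le\varepsilon^{\mathcal{W}}$ needs $a=\mathcal{O}(\Delta_{\mathcal{W}}^{-2}\log(1/\varepsilon^{\mathcal{W}}))$, which is quadratically worse. So the Gaussian/erf window, which achieves the $1/\Delta_{\mathcal{W}}$ scaling, is genuinely needed.

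The main obstacle is exactly this approximation step: obtaining degree linear in $1/\Delta_{\mathcal{W}}$ while keeping $|\Upsilon|\le1$ uniformly and maintaining both error bounds simultaneously. I would take this bound as given from the cited eigenspace-reflection result.

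Next comes the circuit step. By the GQSP theorem of \cite{MotlaghWiebe2024GQSP}, any polynomial $\Upsilon$ of degree $d$ with $|\Upsilon|\le1$ on the unit circle admits a complementary polynomial $\Phi$ of degree $d$ satisfying $|\Upsilon|^2+|\Phi|^2=1$ on the circle. This is a Fejér--Riesz factorization of $1-|\Upsilon|^2\ge0$. It also yields SU(2) rotation angles such that interleaving $d$ applications of $\ket{0}\bra{0}_s\otimes\mathcal{W}+\ket{1}\bra{1}_s\otimes\mathbb{I}$ with single-qubit rotations on $s$ gives the unitary $\mathcal{C}_{\Delta,\varepsilon}(\mathcal{W})$. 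The first column of this unitary is $\ket0\otimes\Upsilon(\mathcal{W})+\ket1\otimes\Phi(\mathcal{W})$, which is exactly the claimed block form. (Normalization conventions, such as $\ket0$ versus $\ket1$ control, can be fixed by a relabeling.)

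Finally, I would check the identity $\Upsilon(\mathcal{W})^\dagger\Upsilon(\mathcal{W})+\Phi(\mathcal{W})^\dagger\Phi(\mathcal{W})=\mathbb{I}$. It follows either from the isometry property of the first column, or spectrally from $|\Upsilon|^2+|\Phi|^2=1$ evaluated on $\mathrm{spec}(\mathcal{W})$. The resource count is $d$ controlled-$\mathcal{W}$ gates, $d+1$ single-qubit rotations, and the single signal qubit $s$.
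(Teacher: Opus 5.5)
Your proposal is correct and follows the same route as the paper. You lift a scalar filter to $\mathcal{W}$ through its spectral decomposition, take the degree-$\mathcal{O}(\Delta_{\mathcal{W}}^{-1}\log(1/\varepsilon^{\mathcal{W}}))$ polynomial from the cited eigenspace-reflection construction, and let GQSP supply $\Phi$ and the circuit; your explicit requirement $|\Upsilon|\le 1$ on the circle and your Fej\'er--Riesz justification of $\Phi$ make precise what the paper's proof leaves implicit.

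The one substantive difference is normalization. The paper's polynomial has $\Upsilon(1)=1$ exactly, and Proposition~\ref{prop:stationary_selective_phase} later uses this in the form $\Upsilon(\mathcal{W})^\dagger\Upsilon(\mathcal{W})\Pi_0=\Pi_0$. Your rescaling by $(1-\varepsilon^{\mathcal{W}}/2)$ only gives $|\Upsilon(1)-1|\le\varepsilon^{\mathcal{W}}$. That suffices for this corollary, but downstream it leaves a leakage of amplitude $|e^{i\phi}-1|\sqrt{1-|\Upsilon(1)|^2}$, of order $\sqrt{\varepsilon^{\mathcal{W}}}$, on the $1$-eigenspace. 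To avoid this, use an exactly normalized filter (e.g.\ a Dolph--Chebyshev-type polynomial) or run your window at accuracy $(\varepsilon^{\mathcal{W}})^2$.
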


\begin{proof}
Given that $\mathcal{W}$ admits a spectral decomposition
\begin{equation}
\mathcal{W} = e^{i\theta_0}\Pi_0 + \sum_{j \ge 1} e^{i\theta_j} \Pi_j,
\end{equation}
then using the polynomial constructed from \cite{claudon2025simplealgorithmreflecteigenspaces} gives a $d$-degree polynomial $\Upsilon \in \mathbb{C}[x]$ which satisfies $\Upsilon(1) = 1$ and $\lvert\Upsilon(e^{i\theta_j})\rvert \le \varepsilon^{\mathcal{W}}$ for all $j \ge 1$ with $\lvert\theta_j\rvert \ge \Delta_{\mathcal{W}}$. Thus applying $\Upsilon$ onto $\mathcal{W}$ results in
\begin{equation}
\begin{aligned}
& \Upsilon(\mathcal{W}) = \Upsilon(1)\Pi_0 + \sum_{j \ge 1} \Upsilon(e^{i\theta_j}) \Pi_j \\
\implies\,& \Upsilon(\mathcal{W}) - \Pi_0 = \sum_{j \ge 1} \Upsilon(e^{i\theta_j}) \Pi_j,
\end{aligned}
\end{equation}
After rearranging, taking the spectral norm gives us
\begin{equation}
\|\Upsilon(\mathcal{W}) - \Pi_0\|
= \max_{j \ge 1} \lvert\Upsilon(e^{i\theta_j})\rvert
\le \varepsilon^{\mathcal{W}},
\end{equation}
suppressing all non-principal eigenvalues of $\mathcal{W}$ to magnitude at most $\varepsilon^{\mathcal{W}}$ for $\lvert\theta_j\rvert \ge \Delta_{\mathcal{W}}$. Moreover, the first-block-column relation stated in the corollary is
\begin{equation}
\mathcal{C}_{\Delta,\varepsilon}(\mathcal{W})
(\ket{0}_s\otimes\mathbb{I}_{AB})
=
\ket{0}_s\otimes\Upsilon(\mathcal{W})
+
\ket{1}_s\otimes\Phi(\mathcal{W}),
\end{equation}
since $\mathcal{C}_{\Delta,\varepsilon}(\mathcal{W})$ is unitary, there exists a complementary polynomial $\Phi$ such that $\Upsilon(\mathcal{W})^\dagger\Upsilon(\mathcal{W})+\Phi(\mathcal{W})^\dagger\Phi(\mathcal{W})=\mathbb{I}$.
\end{proof}

\begin{figure*}[!t]
\centering

\begin{minipage}{\linewidth}\centering
\begin{quantikz}[column sep=4.5mm, row sep=4.5mm]
\lstick{$\ket{0}$} & \qw &
  \gate[wires=3, style={fill=black!7, draw=black!65,
        minimum width=24mm, minimum height=19mm}]
        {\;\widetilde{\mathcal{S}}_\phi(\Pi_0)\;} & \qw & \qw \\
\lstick{$\ket{v}_A$} & \qwbundle{} \qw & & \qw & \qw \\
\lstick{$\ket{0}_B$} & \qwbundle{} \qw & & \qw & \qw
\end{quantikz}%
\;\;$\vcenter{\hbox{\Large$=$}}$\;\;%
\begin{quantikz}[column sep=3.5mm, row sep=4.5mm]
\lstick{$\ket{0}$} & \qw &
  \qw &[10mm]
  \gate[wires=3, style={fill=red!12, draw=red!65!black,
        minimum width=22mm, minimum height=21mm}]
        {\mathcal{C}_{\Delta,\varepsilon}(\mathcal{W})} &[6mm]
  \gate[style={fill=orange!22, draw=orange!65!black}]{P_\phi} &[6mm]
  \gate[wires=3, style={fill=red!12, draw=red!65!black,
        minimum width=22mm, minimum height=21mm}]
        {\mathcal{C}_{\Delta,\varepsilon}(\mathcal{W})^\dagger} &[10mm]
  \qw & \qw \\
\lstick{$\ket{v}_A$} & \qwbundle{} \qw &
  \gate[wires=2, style={fill=blue!12, draw=blue!65!black,
        minimum width=11mm, minimum height=14mm}]{O_{\mathcal P}} &[24mm]
  & \qw & &[24mm]
  \gate[wires=2, style={fill=blue!12, draw=blue!65!black,
        minimum width=11mm, minimum height=14mm}]{O_{\mathcal P}^{\dagger}} &
  \qw \\
\lstick{$\ket{0}_B$} & \qwbundle{} \qw & & & \qw & & & \qw
\end{quantikz}
\end{minipage}

\vspace{4mm}

\begin{tikzpicture}[remember picture]
\node[inner sep=0pt] (circuit) {%
\resizebox{0.95\textwidth}{!}{%
\begin{quantikz}[column sep=2.8mm, row sep=6.5mm]
\lstick{$\ket{0}$}
  & \qw
  & \qw
  & \gate[style={fill=violet!15, draw=violet!60!black}]{R_0}
  & \octrl{1}
  & \gate[style={fill=violet!15, draw=violet!60!black}]{R_1}
  & \octrl{1} &[3mm]
  \push{\;\cdots\;}\qw &[3mm]
  \gate[style={fill=violet!15, draw=violet!60!black}]{R_{d-1}}
  & \octrl{1}
  & \gate[style={fill=violet!15, draw=violet!60!black}]{R_d} &[4mm]
  \gate[style={fill=orange!25, draw=orange!60!black,
        minimum width=10mm}]{P_\phi} &[4mm]
  \gate[style={fill=violet!15, draw=violet!60!black}]{R_d^\dagger}
  & \octrl{1}
  & \gate[style={fill=violet!15, draw=violet!60!black}]{R_{d-1}^\dagger} &[3mm]
  \push{\;\cdots\;}\qw &[3mm]
  \octrl{1}
  & \gate[style={fill=violet!15, draw=violet!60!black}]{R_1^\dagger}
  & \octrl{1}
  & \gate[style={fill=violet!15, draw=violet!60!black}]{R_0^\dagger}
  & \qw \\
\lstick{$\ket{v}_A$}
  & \qwbundle{} \qw
  & \gate[wires=2, style={fill=blue!12, draw=blue!65!black}]{O_{\mathcal P}}
  & \qw
  & \gate[wires=2, style={fill=teal!15, draw=teal!60!black}]{\mathcal W}
  & \qw
  & \gate[wires=2, style={fill=teal!15, draw=teal!60!black}]{\mathcal W} &[3mm]
  \push{\;\cdots\;}\qw &[3mm]
  \qw
  & \gate[wires=2, style={fill=teal!15, draw=teal!60!black}]{\mathcal W}
  & \qw
  & \qw
  & \qw
  & \gate[wires=2, style={fill=teal!15, draw=teal!60!black}]{\mathcal W^\dagger}
  & \qw &[3mm]
  \push{\;\cdots\;}\qw &[3mm]
  \gate[wires=2, style={fill=teal!15, draw=teal!60!black}]{\mathcal W^\dagger}
  & \qw
  & \gate[wires=2, style={fill=teal!15, draw=teal!60!black}]{\mathcal W^\dagger}
  & \qw
  & \gate[wires=2, style={fill=blue!12, draw=blue!65!black}]{O_{\mathcal P}^{\dagger}}
  & \qw \\
\lstick{$\ket{0}_B$}
  & \qwbundle{} \qw
  & \qw & \qw & \qw & \qw & \qw
  & \qw & \qw & \qw & \qw & \qw
  & \qw & \qw & \qw & \qw & \qw
  & \qw & \qw & \qw & \qw & \qw
\end{quantikz}%
}%
};
\node[anchor=east, font=\Large] at ([xshift=-2mm]circuit.west) {$=$};
\draw[decorate, decoration={brace, mirror, amplitude=5pt},
      thick, blue!55!black]
  ($(circuit.south west)!0.13!(circuit.south east)+(0,-1.5mm)$)
  -- node[below=7pt, font=\normalsize] {$\mathcal{C}_{\Delta,\varepsilon}(\mathcal{W})$}
  ($(circuit.south west)!0.46!(circuit.south east)+(0,-1.5mm)$);
\draw[decorate, decoration={brace, mirror, amplitude=5pt},
      thick, blue!55!black]
  ($(circuit.south west)!0.59!(circuit.south east)+(0,-1.5mm)$)
  -- node[below=7pt, font=\normalsize] {$\mathcal{C}_{\Delta,\varepsilon}(\mathcal{W})^\dagger$}
  ($(circuit.south west)!0.92!(circuit.south east)+(0,-1.5mm)$);
\end{tikzpicture}

\vspace{4mm}
\caption{Structure of the GQSP-based selective-phase compiler $\widetilde{\mathcal{S}}_\phi(\Pi_0)$.
\textbf{Top:} the selective phase compiler (left) decomposes into $(\mathbb I\otimes O_{\mathcal P}^{\dagger})\mathcal{C}_{\Delta,\varepsilon}(\mathcal{W})^\dagger (P_\phi \otimes \mathbb{I}) \mathcal{C}_{\Delta,\varepsilon}(\mathcal{W})(\mathbb I\otimes O_{\mathcal P})$ (right).
\textbf{Bottom:} Gate level decomposition, consisting of $d$ controlled $\mathcal{W}$ gates interleaved with single qubit gates $R_0, \ldots, R_d$, which are parametrized by three parameters for generality \cite{MotlaghWiebe2024GQSP}. The second half is the adjoint circuit $\mathcal{C}_{\Delta,\varepsilon}(\mathcal{W})^\dagger$ with reversed sequence $R_d^\dagger, \ldots, R_0^\dagger$ and controlled $\mathcal{W}^\dagger$ gates. Here $d = \mathcal{O}(\Delta_{\mathcal{W}}^{-1}\log(1/\varepsilon^{\mathcal{W}}))$.}
\label{fig:gadget_circuit}
\end{figure*}

Using these results, we construct a selective phase operating on the $1$-eigenspace of $\mathcal{W}$.

\begin{proposition}
\label{prop:stationary_selective_phase}
Given the embedding \eqref{eq:clean-embedding}, then $\widetilde{\mathcal{S}}_\phi(\Pi_0)$
\begin{equation}
\label{eq:fpaa-ready-selective-phase-bound}
\left\|
\widetilde{\mathcal{S}}_\phi(\Pi_0)\mathcal J
-
\mathcal J\mathcal{S}_\phi(\Pi_\pi)
\right\|
\leq
2\varepsilon^{\mathcal{W}}
\end{equation}
uses the \(4d+2\) queries to oracle, and one signal qubit.
\end{proposition}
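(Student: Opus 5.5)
The plan is to compute $\widetilde{\mathcal{S}}_\phi(\Pi_0)\mathcal J\ket{v}_A$ directly for an arbitrary normalized $\ket{v}_A$. We then split the result into an ideal part, which equals $\mathcal J\mathcal S_\phi(\Pi_\pi)\ket{v}_A$ exactly, and an error part controlled by Corollary~\ref{cor:w-spectral-filter}. Finally, we count queries gate by gate in the circuit of Fig.~\ref{fig:gadget_circuit}.

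\emph{Step 1: reduce to the walk space.} By \eqref{eq:clean-to-walk}, the first layer gives $(\mathbb I_s\otimes O_{\mathcal P})\mathcal J\ket{v}_A=\ket0_s\mathcal T\ket{v}$. The block relation of Corollary~\ref{cor:w-spectral-filter} then gives
\[
\mathcal C\,\ket0_s\mathcal T\ket v=\ket0_s\Upsilon(\mathcal W)\mathcal T\ket v+\ket1_s\Phi(\mathcal W)\mathcal T\ket v .
\]
Write $P_\phi=\mathbb I+(e^{i\phi}-1)\ket0\bra0$. Since $\mathcal C^\dagger\mathcal C=\mathbb I$, we obtain
\[
\mathcal C^\dagger(P_\phi\otimes\mathbb I)\mathcal C\ket0_s\mathcal T\ket v=\ket0_s\mathcal T\ket v+(e^{i\phi}-1)\,\mathcal C^\dagger\bigl(\ket0_s\Upsilon(\mathcal W)\mathcal T\ket v\bigr).
\]
Next decompose $\Upsilon(\mathcal W)=\Pi_0+(\Upsilon(\mathcal W)-\Pi_0)$. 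The error term has norm at most $|e^{i\phi}-1|\,\varepsilon^{\mathcal W}\le 2\varepsilon^{\mathcal W}$, because $\mathcal C^\dagger$ is unitary, $\|\mathcal T\ket v\|=1$, and $\|\Upsilon(\mathcal W)-\Pi_0\|\le\varepsilon^{\mathcal W}$.

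\emph{Step 2: the ideal term (main obstacle).} By Lemma~\ref{lem:stationary-intertwining}, $\Pi_0\mathcal T\ket v=\mathcal T\Pi_\pi\ket v=\langle\pi|v\rangle\,\mathcal T\ket\pi$. So it suffices to show that $\mathcal C^\dagger(\ket0_s\mathcal T\ket\pi)=\ket0_s\mathcal T\ket\pi$, which is equivalent to $\mathcal C\,\ket0_s\mathcal T\ket\pi=\ket0_s\mathcal T\ket\pi$.

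Since $\mathcal T\ket\pi$ is a $1$-eigenvector of $\mathcal W$, we have $\Upsilon(\mathcal W)\mathcal T\ket\pi=\Upsilon(1)\mathcal T\ket\pi=\mathcal T\ket\pi$, using $\Upsilon(1)=1$ from the proof of Corollary~\ref{cor:w-spectral-filter}. Evaluating the normalization identity $\Upsilon^\dagger\Upsilon+\Phi^\dagger\Phi=\mathbb I$ on this eigenvector gives $|\Upsilon(1)|^2+|\Phi(1)|^2=1$, hence $\Phi(1)=0$. Therefore the $\ket1_s$ branch vanishes and $\mathcal C$ fixes $\ket0_s\mathcal T\ket\pi$.

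This step is where care is needed. The argument requires $\Upsilon(1)=1$ exactly, not merely $\|\Upsilon(\mathcal W)-\Pi_0\|\le\varepsilon^{\mathcal W}$. If only approximate equality held, the leakage into the $\ket1_s$ branch would contribute an additional $\mathcal O(\sqrt{\varepsilon^{\mathcal W}})$-type term, and I would have to fall back on the cruder bound $\|\mathcal C^\dagger\ket0\Pi_0 - \ket0\Pi_0\|$.

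\emph{Step 3: assemble.} Combining Steps 1 and 2, the output before the final layer is
\[
\ket0_s\mathcal T\bigl(\mathbb I+(e^{i\phi}-1)\Pi_\pi\bigr)\ket v+\mathbf e,\qquad \|\mathbf e\|\le2\varepsilon^{\mathcal W}.
\]
By \eqref{eq:clean-to-walk}, applying $\mathbb I_s\otimes O_{\mathcal P}^\dagger$ maps $\ket0_s\mathcal T\ket w$ to $\mathcal J\ket w$, and it preserves $\|\mathbf e\|$ since it is unitary. Taking the supremum over unit $\ket v$ yields \eqref{eq:fpaa-ready-selective-phase-bound}.

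\emph{Step 4: resources.} $\mathcal C$ uses $d$ controlled-$\mathcal W$ gates and $\mathcal C^\dagger$ uses $d$ controlled-$\mathcal W^\dagger$ gates. Each can be implemented with one $O_{\mathcal P}$ and one $O_{\mathcal P}^\dagger$, controlling only the middle reflection and the swap $S$, which costs no queries. This gives $4d$ queries, plus the outer $O_{\mathcal P}$ and $O_{\mathcal P}^\dagger$, for a total of $4d+2$. The only ancilla used is the signal qubit $s$.
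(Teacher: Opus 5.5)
Your proof is correct. It reaches the same bound as the paper through a different decomposition.

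\textbf{The paper's route.} The paper works on the whole walk space. It computes the blocks $\mathcal A=\bra{0}\mathcal U_\phi\ket{0}=\mathbb I+(e^{i\phi}-1)\mathcal Q$ and $\mathcal B=\bra{1}\mathcal U_\phi\ket{0}$, where $\mathcal Q=\Upsilon(\mathcal W)^\dagger\Upsilon(\mathcal W)$. From unitarity it gets $\mathcal B^\dagger\mathcal B=|e^{i\phi}-1|^2(\mathcal Q-\mathcal Q^2)$, and combines the two blocks into the exact Gram identity $\mathcal E_\phi^\dagger\mathcal E_\phi=|e^{i\phi}-1|^2(\mathcal Q-\Pi_0)$. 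Only afterwards does it pass to $\operatorname{Im}\mathcal T$ via $\mathcal S_\phi(\Pi_0)\mathcal T=\mathcal T\mathcal S_\phi(\Pi_\pi)$.

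\textbf{Your route.} You expand $P_\phi=\mathbb I+(e^{i\phi}-1)\ket{0}\bra{0}$ and apply the intertwining lemma first, so the ideal part reduces to the single vector $\mathcal T\ket{\pi}$. You show that $\mathcal C$ fixes $\ket{0}_s\mathcal T\ket{\pi}$, because $\Upsilon(1)=1$ forces $\Phi(1)=0$. You then bound the residual $(e^{i\phi}-1)\mathcal C^\dagger\bigl(\ket{0}_s(\Upsilon(\mathcal W)-\Pi_0)\mathcal T\ket{v}\bigr)$ in one line. That residual is exactly the paper's $\mathcal E_\phi\mathcal T\ket{v}$, written so that its norm can be read off. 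Since $\|\Upsilon(\mathcal W)-\Pi_0\|^2=\|\mathcal Q-\Pi_0\|$, your estimate is just as sharp.

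\textbf{What each buys.} Your argument is shorter and needs neither the spectral form of $\mathcal Q$ nor any block algebra. It also makes the mechanism explicit: the compiled circuit acts exactly on the stationary direction, and all error comes from the non-stationary residual. The paper's argument, in exchange, separates the two error sources. The in-block deviation $(e^{i\phi}-1)(\mathcal Q-\Pi_0)$ is only $\mathcal O((\varepsilon^{\mathcal W})^2)$. The leakage $\mathcal B$ into the $\ket{1}_s$ branch is $\mathcal O(\varepsilon^{\mathcal W})$ and is the dominant term.

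\textbf{Shared assumption.} Both arguments rely on the exact normalization $\Upsilon(1)=1$; the paper uses it through $\mathcal Q\Pi_0=\Pi_0$. So the caveat you raise in Step~2 applies to the paper as well.

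\textbf{Query count.} Your count supplies a detail the paper leaves implicit. A controlled-$\mathcal W$ costs only one $O_{\mathcal P}$ and one $O_{\mathcal P}^\dagger$, because only the middle reflection and the swap need to be controlled.
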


\begin{proof}
For readability, whenever \(\ket{0}\), \(\ket{1}\), \(\bra{0}\), or \(\bra{1}\) appears next to an operator, the identity on the walk space is implicit i.e. \(\ket{0}\) denotes \(\ket{0}\otimes\mathbb{I}_{AB}\). Let
\[
\mathcal{C}
:=
\mathcal{C}_{\Delta_{\mathcal{W}},\varepsilon^{\mathcal{W}}}(\mathcal{W}),
\]
and 
\[
\mathcal{U}_\phi
:=
\mathcal{C}^\dagger(P_\phi\otimes\mathbb{I}_{AB})\mathcal{C}.
\]
By Corollary~\ref{cor:w-spectral-filter},
\[
\mathcal{C}(\ket{0}\otimes\mathbb{I}_{AB})
=
\ket{0}\otimes\Upsilon(\mathcal{W})
+
\ket{1}\otimes\Phi(\mathcal{W}).
\]
Let
\[
\mathcal{Q}
:=
\Upsilon(\mathcal{W})^\dagger
\Upsilon(\mathcal{W}),
\]
\[
\mathcal{F}
:=
\Phi(\mathcal{W})^\dagger
\Phi(\mathcal{W}),
\]
since \(\mathcal{C}\) is unitary, its first block column is an isometry, and hence
\[
\mathcal{Q}+\mathcal{F}=\mathbb{I}.
\]
The conjugating with block \(\ket{0}\) gives
\begin{equation}
\begin{aligned}
\mathcal{A}
&=
\bra{0}\mathcal{U}_\phi\ket{0} \\
&=
e^{i\phi}
\Upsilon(\mathcal{W})^\dagger
\Upsilon(\mathcal{W})
+
\Phi(\mathcal{W})^\dagger
\Phi(\mathcal{W}) \\
&=
\mathbb{I}
+
(e^{i\phi}-1)\mathcal{Q}.
\end{aligned}
\end{equation}
Moreover,
\[
\mathcal{Q}
=
\Pi_0
+
\sum_{j\geq1}
\left|
\Upsilon(e^{i\theta_j})
\right|^2
\Pi_j,
\]
so
\[
\mathcal{Q}\Pi_0
=
\Pi_0\mathcal{Q}
=
\Pi_0.
\]
Let $\mathcal{B}:=\bra{1}\mathcal{U}_\phi\ket{0}$, since \(\mathcal{U}_\phi\) is unitary, its first block column is an isometry, and therefore
\begin{equation}
\begin{aligned}
\mathcal{B}^\dagger\mathcal{B}
&=
\mathbb{I}-\mathcal{A}^\dagger\mathcal{A} \\
&=
\mathbb{I}
-
\bigl(
\mathbb{I}+(e^{-i\phi}-1)\mathcal{Q}
\bigr)
\bigl(
\mathbb{I}+(e^{i\phi}-1)\mathcal{Q}
\bigr) \\
&=
-(e^{i\phi}+e^{-i\phi}-2)\mathcal{Q}
-
|e^{i\phi}-1|^2\mathcal{Q}^2 \\
&=
|e^{i\phi}-1|^2
(\mathcal{Q}-\mathcal{Q}^2),
\end{aligned}
\end{equation}
where the last equality uses $e^{i\phi}+e^{-i\phi}-2
=
-|e^{i\phi}-1|^2.$ Next, define the error between the two as
\[
\mathcal{E}_\phi
:=
\mathcal{U}_\phi
(\ket{0}\otimes\mathbb{I}_{AB})
-
(\ket{0}\otimes\mathbb{I}_{AB})
\mathcal{S}_\phi(\Pi_0).
\]
Using the block formulation above,
\[
\mathcal{E}_\phi
=
(e^{i\phi}-1)
\ket{0}\otimes(\mathcal{Q}-\Pi_0)
+
\ket{1}\otimes\mathcal{B}.
\]
The two states are orthogonal, so multiplying by its own conjugate removes the cross terms and simplifies to
\[
\begin{aligned}
\mathcal{E}_\phi^\dagger\mathcal{E}_\phi
&=
|e^{i\phi}-1|^2
(\mathcal{Q}-\Pi_0)^2
+
\mathcal{B}^\dagger\mathcal{B} \\
&=
|e^{i\phi}-1|^2
\left[
(\mathcal{Q}-\Pi_0)^2
+
\mathcal{Q}
-
\mathcal{Q}^2
\right],
\end{aligned}
\]
Because $\mathcal{Q}\Pi_0=\Pi_0\mathcal{Q}=\Pi_0$,
\[
(\mathcal{Q}-\Pi_0)^2
=
\mathcal{Q}^2-\Pi_0,
\]
where
\[
0\leq \mathcal{Q}-\Pi_0
=
\sum_{j\geq1}|\Upsilon(e^{i\theta_j})|^2\Pi_j
\leq
(\varepsilon^{\mathcal{W}})^2(\mathbb{I}-\Pi_0),
\]
we get $\mathcal{E}_\phi^\dagger \mathcal{E}_\phi
=
|e^{i\phi}-1|^2(\mathcal{Q}-\Pi_0).$ Taking operator norm on both sides yield
\[
\|\mathcal{E}_\phi\|^2
=
|e^{i\phi}-1|^2\|\mathcal{Q}-\Pi_0\|
\leq
|e^{i\phi}-1|^2(\varepsilon^{\mathcal{W}})^2,
\]
where $\|\mathcal{E}_\phi\|
\leq
|e^{i\phi}-1|\varepsilon^{\mathcal{W}}.$ Since $\Pi_0\mathcal{T}=\mathcal{T}\Pi_\pi$ from Lemma \ref{lem:stationary-intertwining}, it follows
\[
\begin{aligned}
\mathcal{S}_\phi(\Pi_0)\mathcal{T}
&=
\bigl(\mathbb{I}+(e^{i\phi}-1)\Pi_0\bigr)\mathcal{T} \\
&=
\mathcal{T}+(e^{i\phi}-1)\Pi_0\mathcal{T} \\
&=
\mathcal{T}+(e^{i\phi}-1)\mathcal{T}\Pi_\pi \\
&=
\mathcal{T}\bigl(\mathbb{I}_\pi+(e^{i\phi}-1)\Pi_\pi\bigr) \\
&=
\mathcal{T}\mathcal{S}_\phi(\Pi_\pi).
\end{aligned}
\]
Recall the embedding \eqref{eq:clean-embedding}, which satisfy
\[
(\mathbb{I}_s\otimes O_{\mathcal{P}})\mathcal J
=
(\ket{0}_s\otimes\mathbb{I}_{AB})\mathcal{T},
\]
and
\[
(\mathbb{I}_s\otimes O_{\mathcal{P}}^\dagger)
(\ket{0}_s\otimes\mathbb{I}_{AB})\mathcal{T}
=
\mathcal J.
\]
Therefore,
\[
\begin{aligned}
&
\widetilde{\mathcal{S}}_\phi(\Pi_0)\mathcal J
-
\mathcal J\mathcal{S}_\phi(\Pi_\pi) \\
&=
(\mathbb{I}_s\otimes O_{\mathcal{P}}^\dagger)
\left[
\mathcal{U}_\phi
(\ket{0}_s\otimes\mathbb{I}_{AB})\mathcal{T}
-
(\ket{0}_s\otimes\mathbb{I}_{AB})
\mathcal{T}\mathcal{S}_\phi(\Pi_\pi)
\right] \\
&=
(\mathbb{I}_s\otimes O_{\mathcal{P}}^\dagger)
\left[
\mathcal{U}_\phi
(\ket{0}_s\otimes\mathbb{I}_{AB})
-
(\ket{0}_s\otimes\mathbb{I}_{AB})
\mathcal{S}_\phi(\Pi_0)
\right]
\mathcal{T} \\
&=
(\mathbb{I}_s\otimes O_{\mathcal{P}}^\dagger)
\mathcal{E}_\phi
\mathcal{T}.
\end{aligned}
\]
Since \(O_{\mathcal{P}}^\dagger\) is unitary and \(\mathcal{T}\) is an isometry, we thus bound
\[
\begin{aligned}
\left\|
\widetilde{\mathcal{S}}_\phi(\Pi_0)\mathcal J
-
\mathcal J\mathcal{S}_\phi(\Pi_\pi)
\right\|
&\leq
\|\mathcal{E}_\phi\| \\
&\leq
|e^{i\phi}-1|
\varepsilon^{\mathcal{W}} \\
&=
2\left|\sin\frac{\phi}{2}\right|
\varepsilon^{\mathcal{W}} \\
&\leq
2\varepsilon^{\mathcal{W}}.
\end{aligned}
\]
Finally, \(\mathcal{C}\) uses \(d\) controlled-\(\mathcal{W}\) and \(\mathcal{C}^\dagger\) uses \(d\) controlled-\(\mathcal{W}^\dagger\) unitary. Thus, the $\mathcal U$ uses \(2d\) controlled-\(\mathcal{W}^\dagger\) unitary, hence \(4d\) oracle queries. Including the outer \(O_{\mathcal{P}}\) and \(O_{\mathcal{P}}^\dagger\) gives \(4d+2\) oracle queries in total.
\end{proof}


\section{Fixed-Point Amplitude Amplification}
\label{sec:fpaa-transport}

With the selective phase circuit constructed in Section~\ref{sec:selective-phase}, we can now run FPAA \cite{Yoder_2014}, an amplitude amplification scheme that rotates the state from $\ket{\pi_i}$ to $\ket{\pi_{i+1}}$ given only a lower bound on the overlap $p$ between the two states. Unlike Grover's algorithm or vanilla amplitude amplification, which require the exact overlap, FPAA avoids the notorious \emph{soufflé problem}: over- or under-rotation both degrade the final overlap. Furthermore, FPAA retains the quadratic improvement $\mathcal{O}(1/\sqrt{p})$, unlike the $\pi/3$-amplitude amplification scheme used in \cite{wocjan_speed-up_2008}, where the rigidly imposed phase $\pi/3$ worsens the complexity from $1/\sqrt{p}$ to $1/p$. We restate the theorem as follows:

\begin{theorem}[FPAA \cite{Yoder_2014}, restated]
\label{thm:fpaa-ideal}
Let $\ket{\pi_i}, \ket{\pi_{i+1}}$ be unit vectors with overlap lower bound $\lvert\braket{\pi_i}{\pi_{i+1}}\rvert^2 \ge p_i > 0$. Then, for any $\varepsilon_i^{\mathrm{FP}} \in (0,1)$, there exists an odd integer
\begin{equation}
L_i = \mathcal{O}\!\left(\frac{1}{\sqrt{p_i}}\log\frac{1}{\varepsilon_i^{\mathrm{FP}}}\right)
\end{equation}
and a sequence of phase angles $\{\alpha_j, \beta_j\}_{j=1}^{(L_i-1)/2}$ such that the unitary
\begin{equation}
U_i^{\mathrm{FP}} := \prod_{j=1}^{(L_i-1)/2} \mathcal{S}_{\beta_j}(\Pi_{\pi_{i+1}})\, \mathcal{S}_{\alpha_j}(\Pi_{\pi_i})
\end{equation}
satisfies $D_{\mathrm{tr}}\bigl(U_i^{\mathrm{FP}}\ket{\pi_i},\, \ket{\pi_{i+1}}\bigr) \le \varepsilon_i^{\mathrm{FP}}.$
\end{theorem}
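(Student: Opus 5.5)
The plan is to reduce the statement to the two-dimensional analysis of Yoder, Low and Chuang~\cite{Yoder_2014}. The first step is to observe that every factor $\mathcal{S}_{\beta_j}(\Pi_{\pi_{i+1}})$ and $\mathcal{S}_{\alpha_j}(\Pi_{\pi_i})$ acts as the identity on the orthogonal complement of $\mathcal{K}_i:=\operatorname{span}\{\ket{\pi_i},\ket{\pi_{i+1}}\}$. Hence $U_i^{\mathrm{FP}}$ leaves $\mathcal{K}_i$ invariant, and $U_i^{\mathrm{FP}}\ket{\pi_i}\in\mathcal{K}_i$. If $|\braket{\pi_i}{\pi_{i+1}}|=1$ the claim is trivial, so I assume $\dim\mathcal{K}_i=2$.

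Next I fix global phases so that $\ket{\pi_{i+1}}=\sqrt{\lambda}\ket{\pi_i}+\sqrt{1-\lambda}\ket{\pi_i^\perp}$ with $\lambda:=|\braket{\pi_i}{\pi_{i+1}}|^2\ge p_i$. This is harmless, because the trace distance and both projectors are invariant under phases of the vectors. I then identify $\ket{\pi_i}$ with Yoder et al.'s initial state $\ket{s}=A\ket{0}$ and $\ket{\pi_{i+1}}$ with their target. In this identification $\mathcal{S}_{\alpha}(\Pi_{\pi_i})$ plays the role of their $S_s(\alpha)=A S_0(\alpha)A^\dagger$, and $\mathcal{S}_\beta(\Pi_{\pi_{i+1}})$ plays the role of the oracle phase $S_t(\beta)$. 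Two points need checking here. Their convention uses $\mathbb{I}-(1-e^{-i\alpha})\Pi$, so I may need to reparametrise $\alpha\mapsto-\alpha$, which only relabels the angle sequence. They also amplify the projection onto a target subspace, and here that projection is simply the overlap with $\ket{\pi_{i+1}}$. With $L_i=2l+1$, their theorem supplies angles with $\alpha_j=-\beta_{l-j+1}$, given in closed form through $\gamma^{-1}=T_{1/L_i}(1/\delta)$, for which the success probability is $1-\delta^2T_{L_i}^2\bigl(T_{1/L_i}(1/\delta)\sqrt{1-\lambda}\bigr)$.

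For the remaining steps I set $\delta:=\varepsilon_i^{\mathrm{FP}}$. Since $|T_{L}(x)|\le1$ for $|x|\le1$, the success probability is at least $1-\delta^2$ whenever $\sqrt{1-\lambda}\,T_{1/L}(1/\delta)\le 1$. That condition holds provided $\lambda\ge w^2$, where $w:=\sqrt{1-\gamma^2}$. Using $T_{1/L}(1/\delta)=\cosh(L^{-1}\cosh^{-1}(1/\delta))$ and $\cosh^{-1}(1/\delta)\le\log(2/\delta)$, it suffices to take $L_i\ge \log(2/\varepsilon_i^{\mathrm{FP}})/\sqrt{p_i}$, rounded up to an odd integer. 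This gives $L_i=\mathcal{O}\bigl(p_i^{-1/2}\log(1/\varepsilon_i^{\mathrm{FP}})\bigr)$. Finally, the output is a pure state with $|\braket{\pi_{i+1}}{U_i^{\mathrm{FP}}\pi_i}|^2\ge1-\delta^2$, so the pure-state formula from the Preliminaries gives $D_{\mathrm{tr}}\le\sqrt{\delta^2}=\varepsilon_i^{\mathrm{FP}}$.

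I expect the main obstacle to be the bookkeeping in the identification with~\cite{Yoder_2014}: the product ordering, which operator acts first, and the sign conventions of the phases. My plan is to verify the correspondence directly on the $2\times2$ matrix representation within $\mathcal{K}_i$, where each factor is an explicit rank-one phase, and then to quote their Chebyshev identity rather than re-deriving it. Everything else reduces to the elementary Chebyshev estimates above.
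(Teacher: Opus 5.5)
Your route is the one the paper relies on. The paper gives no argument beyond citing \cite{Yoder_2014}. Your invariant-subspace remark is correct, and so is the Chebyshev estimate, which goes through via $1-\gamma^2=\tanh^2\bigl(L^{-1}\cosh^{-1}(1/\delta)\bigr)$ and $\tanh y\le y$. The pure-state trace-distance conversion is also correct.

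The step that fails as written is the one you flagged: identifying the stated product with their sequence. In \cite{Yoder_2014} the output is $G(\alpha_l,\beta_l)\cdots G(\alpha_1,\beta_1)\ket{s}$ with $G(\alpha,\beta)=-S_s(\alpha)S_t(\beta)$, so within each iterate the \emph{target} phase acts first. In the stated product each pair is $\mathcal{S}_{\beta_j}(\Pi_{\pi_{i+1}})\,\mathcal{S}_{\alpha_j}(\Pi_{\pi_i})$, which is the reverse order. Consequently, under either reading of $\prod_j$:
\begin{itemize}
\item the first factor to hit $\ket{\pi_i}$ is some $\mathcal{S}_{\alpha}(\Pi_{\pi_i})$, which merely multiplies $\ket{\pi_i}$ by $e^{i\alpha}$;
\item the last factor is some $\mathcal{S}_{\beta}(\Pi_{\pi_{i+1}})$, which leaves $|\braket{\pi_{i+1}}{\psi}|$ unchanged.
\end{itemize}
Substituting their angles $\alpha_j=-\beta_{l-j+1}$ directly therefore does not reproduce their $S_{L_i}\ket{s}$, and their success-probability formula does not apply. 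The imprecision sits in the restatement itself, so quoting the reference verbatim does not settle it.

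The repair is cheap. Set aside the two factors that are inert for the figure of merit. The remaining $L_i-3$ factors, read from right to left, alternate target phase, initial-state phase, \dots. They form exactly $l-1$ of their iterates, with the pairing $(\alpha_{k+1},\beta_k)$.

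So take their sequence of odd length $L'\ge\log(2/\varepsilon_i^{\mathrm{FP}})/\sqrt{p_i}$ with angles $(\alpha'_k,\beta'_k)_{k=1}^{l'}$, and set $L_i:=L'+2$, so that $l=l'+1$. Take $j=1$ as the pair applied first; the other reading is identical after reversing indices. Put $\beta_k:=\beta'_k$ and $\alpha_{k+1}:=-\alpha'_k$ for $1\le k\le l'$, where the sign absorbs their $e^{-i\alpha}$ convention for $S_s$, as you anticipated. Choose $\alpha_1$ and $\beta_l$ arbitrarily.

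Then $U_i^{\mathrm{FP}}\ket{\pi_i}=(-1)^{l'}e^{i\alpha_1}\,\mathcal{S}_{\beta_l}(\Pi_{\pi_{i+1}})\,S_{L'}\ket{\pi_i}$. Hence $|\braket{\pi_{i+1}}{U_i^{\mathrm{FP}}\pi_i}|^2=P_{L'}\ge 1-\delta^2$, by your Chebyshev estimate applied to $L'$. The additive $2$ is absorbed into the $\mathcal{O}(\cdot)$, and with this index shift your argument is complete.
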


\begin{corollary}[GQSP-based FPAA]
\label{cor:gqsp_fpaa}
Let \(U_i^{\mathrm{FP}}\) be the ideal unitary from
Theorem~\ref{thm:fpaa-ideal}, acting on \(\mathcal H_A\), and let
\(\widetilde U_i^{\mathrm{FP}}\) be the GQSP-based FPAA, acting on
\(\mathcal H_s\otimes\mathcal H_A\otimes\mathcal H_B\). Then, for any odd FPAA sequence length \(L_i\) and
\(\varepsilon_i^{\mathrm{FP}},\varepsilon_i^{\mathcal W}\in(0,1)\),
\begin{equation}
D_{\mathrm{tr}}\left(
\widetilde U_i^{\mathrm{FP}}\mathcal J\ket{\pi_i},
\mathcal J\ket{\pi_{i+1}}
\right)
\leq
2(L_i-1)\varepsilon_i^{\mathcal{W}}+\varepsilon_i^{\mathrm{FP}}.
\end{equation}
\end{corollary}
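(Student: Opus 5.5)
The plan is a hybrid (telescoping) argument: replace the ideal selective phases in $U_i^{\mathrm{FP}}$ one at a time by their compiled counterparts, and use Proposition~\ref{prop:stationary_selective_phase} to bound each swap. We first fix the definition of $\widetilde U_i^{\mathrm{FP}}$ implied by the statement: it is $U_i^{\mathrm{FP}}$ with each factor $\mathcal{S}_{\alpha_j}(\Pi_{\pi_i})$ replaced by the compiler $\widetilde{\mathcal{S}}_{\alpha_j}(\Pi_0^{(i)})$ built from the walk $\mathcal W_i$ of $\mathcal P_i$. Likewise, each factor $\mathcal{S}_{\beta_j}(\Pi_{\pi_{i+1}})$ is replaced by $\widetilde{\mathcal{S}}_{\beta_j}(\Pi_0^{(i+1)})$ built from $\mathcal W_{i+1}$, each with precision $\varepsilon_i^{\mathcal W}$. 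There are $(L_i-1)/2$ pairs, hence $L_i-1$ compiled factors in total.

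Write $U_i^{\mathrm{FP}}=G_{L_i-1}\cdots G_1$ and $\widetilde U_i^{\mathrm{FP}}=\widetilde G_{L_i-1}\cdots\widetilde G_1$. Proposition~\ref{prop:stationary_selective_phase} gives, for every $k$,
\[
\|\widetilde G_k\mathcal J-\mathcal J G_k\|\le 2\varepsilon_i^{\mathcal W}.
\]
We then telescope:
\[
\widetilde U_i^{\mathrm{FP}}\mathcal J-\mathcal J U_i^{\mathrm{FP}}=\sum_{k=1}^{L_i-1}\widetilde G_{L_i-1}\cdots\widetilde G_{k+1}\bigl(\widetilde G_k\mathcal J-\mathcal J G_k\bigr)G_{k-1}\cdots G_1 .
\]
Every $\widetilde G$ is unitary on the full register and every $G$ is unitary on $\mathcal H_A$, so each summand has norm at most $2\varepsilon_i^{\mathcal W}$. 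This yields
\[
\|\widetilde U_i^{\mathrm{FP}}\mathcal J\ket{\pi_i}-\mathcal J U_i^{\mathrm{FP}}\ket{\pi_i}\|\le 2(L_i-1)\varepsilon_i^{\mathcal W}.
\]

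Next we convert to trace distance. For unit vectors $\ket a,\ket b$ we have $D_{\mathrm{tr}}(\ket a,\ket b)=\sqrt{1-|\langle a|b\rangle|^2}\le\|\ket a-\ket b\|$. One checks this via $1-|\langle a|b\rangle|^2\le 2(1-|\langle a|b\rangle|)\le 2(1-\mathrm{Re}\langle a|b\rangle)=\|a-b\|^2$. Trace distance on pure states obeys the triangle inequality, and $\mathcal J$ is an isometry, so trace distances between $\mathcal J$-embedded states equal those of the original states. Combining these,
\[
D_{\mathrm{tr}}(\widetilde U_i^{\mathrm{FP}}\mathcal J\ket{\pi_i},\mathcal J\ket{\pi_{i+1}})\le 2(L_i-1)\varepsilon_i^{\mathcal W}+D_{\mathrm{tr}}(U_i^{\mathrm{FP}}\ket{\pi_i},\ket{\pi_{i+1}}).
\]
The last term is at most $\varepsilon_i^{\mathrm{FP}}$ by Theorem~\ref{thm:fpaa-ideal}. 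For a general odd $L_i$ and a given phase sequence, the same argument holds with that sequence's ideal error in place of $\varepsilon_i^{\mathrm{FP}}$.

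The main obstacle is the interface in the telescoping step. Proposition~\ref{prop:stationary_selective_phase} controls only $\widetilde G_k\mathcal J$, not $\widetilde G_k$ on all of $\mathcal H_s\otimes\mathcal H_A\otimes\mathcal H_B$, so we must apply compiled factors to states outside $\operatorname{Im}\mathcal J$ only through the unitary prefix $\widetilde G_{L_i-1}\cdots\widetilde G_{k+1}$, never through the intertwining bound. Arranging the telescoping sum as written handles this: ideal operators act first, inside $\mathcal H_A$, and compiled ones act last. A second point to verify is that the compilers for $\Pi_{\pi_i}$ and $\Pi_{\pi_{i+1}}$ share the same embedding $\mathcal J$ despite using different oracles $O_{\mathcal P_i}$ and $O_{\mathcal P_{i+1}}$. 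They do, because $\mathcal J$ is oracle-independent: it appends $\ket{0}_s$ and $\ket{0}^{\otimes n_\Omega}_B$ regardless of which chain is used.
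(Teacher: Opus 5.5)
Your proposal is correct and follows the paper's proof essentially step for step. Both use the hybrid telescoping identity with compiled factors on the left and ideal ones on the right, the per-factor bound $2\varepsilon_i^{\mathcal W}$ from Proposition~\ref{prop:stationary_selective_phase}, the bound of trace distance by Euclidean distance for pure states, the isometry of $\mathcal J$, and the triangle inequality with Theorem~\ref{thm:fpaa-ideal}. Your additions make explicit points the paper leaves implicit: the verification that $D_{\mathrm{tr}}(\ket{a},\ket{b})\le\|\ket{a}-\ket{b}\|$, the fact that $\mathcal J$ is oracle-independent, and the caveat that the $\varepsilon_i^{\mathrm{FP}}$ term holds only for the length and phases supplied by Theorem~\ref{thm:fpaa-ideal}.
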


\begin{proof}
Set \(M:=2\cdot\frac{L_i-1}{2}=L_i-1\), the total number of selective-phase factors.  Enumerate
the ideal factors in their order of application as
\(V_1,\ldots,V_M\), and let
\(\widetilde V_1,\ldots,\widetilde V_M\) denote their compiled
counterparts.  Thus
\[
U_i^{\mathrm{FP}}=V_M\cdots V_1,
\qquad
\widetilde U_i^{\mathrm{FP}}
=\widetilde V_M\cdots\widetilde V_1.
\]
Each \(V_r\) acts on \(\mathcal H_A\), whereas each \(\widetilde V_r\)
acts on \(\mathcal H_s\otimes\mathcal H_A\otimes\mathcal H_B\).
Proposition~\ref{prop:stationary_selective_phase} gives 
\begin{equation}
\left\|\widetilde V_r\mathcal J-\mathcal J V_r\right\|
\leq 2\varepsilon_i^{\mathcal W},
\qquad r=1,\ldots,M.
\end{equation}
The telescoping identity gives
\begin{equation}
\begin{aligned}
\widetilde U_i^{\mathrm{FP}}\mathcal J-\mathcal J U_i^{\mathrm{FP}}
=\sum_{r=1}^{M}
&\widetilde V_M\cdots\widetilde V_{r+1}\\
&{}\times
\bigl(\widetilde V_r\mathcal J-\mathcal J V_r\bigr)
V_{r-1}\cdots V_1,
\end{aligned}
\end{equation}
 All \(V_r\) and \(\widetilde V_r\) are unitary, so submultiplicativity yields
\begin{equation}
\left\|
\widetilde U_i^{\mathrm{FP}}\mathcal J-\mathcal J U_i^{\mathrm{FP}}
\right\|
\leq
2M\varepsilon_i^{\mathcal W}
=
2(L_i-1)\varepsilon_i^{\mathcal W}.
\end{equation}
For normalized pure states, trace distance is at most their Euclidean
distance.  Hence
\begin{equation}
\begin{aligned}
D_{\mathrm{tr}}\left(
\widetilde U_i^{\mathrm{FP}}\mathcal J\ket{\pi_i},
\mathcal J U_i^{\mathrm{FP}}\ket{\pi_i}
\right)
&\leq
\left\|
\bigl(\widetilde U_i^{\mathrm{FP}}\mathcal J-\mathcal J U_i^{\mathrm{FP}}\bigr)
\ket{\pi_i}
\right\| \\
&\leq
2(L_i-1)\varepsilon_i^{\mathcal W}.
\end{aligned}
\end{equation}
Because \(\mathcal J\) is an isometry, Theorem~\ref{thm:fpaa-ideal} also gives
\begin{equation}
\begin{aligned}
D_{\mathrm{tr}}\left(
\mathcal J U_i^{\mathrm{FP}}\ket{\pi_i},
\mathcal J\ket{\pi_{i+1}}
\right)
&=
D_{\mathrm{tr}}\left(
U_i^{\mathrm{FP}}\ket{\pi_i},
\ket{\pi_{i+1}}
\right) \\
&\leq \varepsilon_i^{\mathrm{FP}}.
\end{aligned}
\end{equation}
The claimed result follows from the triangle inequality.
\end{proof}


\section{Main Results}
\label{sec:end-to-end}

We now assemble the above ingredients into an end-to-end quantum-sampling framework, presented in Algorithm~\ref{alg:qsample_gqsp}. The procedure is as follows. Begin with the clean encoded state $\mathcal J\ket{\pi_0}$ and iterate $\ell$ times: synthesize the projector $\Pi_0$ using GQSP (Corollary~\ref{cor:w-spectral-filter}); construct, from this projector, the sequence of compiled selective-phase circuits
$\bigl\{\widetilde{\mathcal{S}}_{i,\alpha_j}(\Pi_0),\,\widetilde{\mathcal{S}}_{i+1,\beta_j}(\Pi_0)\bigr\}_{j=1}^{(L_i-1)/2}$ (Proposition~\ref{prop:stationary_selective_phase}); and assemble FPAA as their product
\begin{equation*}
\widetilde{U}_i^{\mathrm{FP}}
=
\prod_{j=1}^{(L_i-1)/2}
\widetilde{\mathcal{S}}_{i+1,\beta_j}(\Pi_0)\,
\widetilde{\mathcal{S}}_{i,\alpha_j}(\Pi_0)
\end{equation*}
(Corollary~\ref{cor:gqsp_fpaa}). When all stages are completed, the algorithm returns the joint state $\ket{\widetilde{\Psi}_\ell}$ within the prescribed trace-distance error. We present our main theorem below, followed by an instantiation in terms of spectral gap and total variation distance, which is relevant for analysis through the classical-sampling lens. A comparison of our query and space complexity against the prior \textsc{QSample} framework is given in Table~\ref{tab:comparison}, with full-state numerical validation shown in Fig.~\ref{fig:numerical-validation}. The full architecture of the algorithm is visualized in Fig.~\ref{fig:qsa_architecture}.

\begin{figure*}[!t]
\centering
\definecolor{walkc}{HTML}{3B7DB3}
\definecolor{gqspc}{HTML}{7E5DA8}
\definecolor{selc}{HTML}{C79445}
\definecolor{fpaac}{HTML}{C06267}
\resizebox{0.95\textwidth}{!}{%
\begin{tikzpicture}[
    >=Stealth,
    font=\small,
    box/.style={
        draw=#1!65!black, thick, rounded corners=3pt,
        fill=#1!14, align=center,
        minimum height=17mm, minimum width=26mm,
        text width=23mm, inner sep=3pt},
    walk/.style={box=walkc},
    gqsp/.style={box=gqspc},
    sel/.style={box=selc},
    fpaa/.style={box=fpaac},
    sn/.style={
        circle, draw=blue!40!black, thick,
        fill=blue!6, inner sep=0pt,
        font=\footnotesize, minimum size=10mm},
]
\node[sn]   (p0)  at (0,    4.4) {$\ket{\pi_0}$};
\node[sn]   (p1)  at (2.2,  4.4) {$\ket{\pi_1}$};
\node[font=\footnotesize, text=black!50] (d1)  at (4.0,  4.4) {$\cdots$};
\node[sn]   (pi)  at (5.8,  4.4) {$\ket{\pi_i}$};
\node[sn]   (pi1) at (8.6,  4.4) {$\ket{\pi_{i+1}}$};
\node[font=\footnotesize, text=black!50] (d2)  at (10.6, 4.4) {$\cdots$};
\node[sn]   (pl)  at (12.6, 4.4) {$\ket{\pi_\ell}$};

\foreach \a/\b in {p0/p1, p1/d1, d1/pi, pi/pi1, pi1/d2, d2/pl}{
    \draw[->, thick, black!40] (\a) -- (\b);
}

\draw[decorate, decoration={brace, amplitude=5pt, raise=4pt},
      thick, blue!50!black]
    (pi.north west) -- (pi1.north east)
    node[midway, above=11pt, font=\scriptsize, text=blue!65!black]
    {$\lvert\braket{\pi_i}{\pi_{i+1}}\rvert^2 \ge p_i$};

\coordinate (arrowmid) at ($(pi.east)!0.5!(pi1.west)$);

\begin{scope}[on background layer]
\draw[thick, black!35, densely dashed] (arrowmid) -- (-0.6, 2.55);
\draw[thick, black!35, densely dashed] (arrowmid) -- (15.2, 2.55);
\end{scope}

\begin{scope}[on background layer]
\fill[black!3, rounded corners=6pt] (-0.8, 0.3) rectangle (15.4, 2.55);
\draw[black!30, thick, rounded corners=6pt, dashed]
    (-0.8, 0.3) rectangle (15.4, 2.55);
\end{scope}

\node[sn]       (in)   at (0.5,  1.4) {$\ket{\widetilde{\Psi}_i}$};
\node[walk]     (s1)   at (2.9,  1.4) {\textbf{Szegedy Walk:}\\[1pt]$\mathcal{W}_i,\,\Delta_{\mathcal{W}_i}$};
\node[gqsp]     (s2)   at (5.9,  1.4) {\textbf{GQSP:}\\[1pt]$\mathcal{C}_{\Delta,\varepsilon}(\mathcal{W})$};
\node[sel]      (s3)   at (8.9,  1.4) {\textbf{Selective Phase}\\\textbf{Compiler:}\\[1pt]$\widetilde{\mathcal{S}}_\phi(\Pi_{0})$};
\node[fpaa]     (s4)   at (11.9, 1.4) {\textbf{GQSP-based}\\\textbf{FPAA:}\\[1pt]$\widetilde{U}_i^{\mathrm{FP}}$};
\node[sn]       (out)  at (14.4, 1.4) {$\ket{\widetilde{\Psi}_{i+1}}$};

\foreach \a/\b in {in/s1, s1/s2, s2/s3, s3/s4, s4/out}{
    \draw[->, thick, black!65] (\a.east) -- (\b.west);
}
\end{tikzpicture}%
}
\caption{Architecture of the constant-ancilla \textsc{QSample} preparation algorithm. \textbf{Top row:} the cooling schedule advances through $\ell$ stages of stationary states $\ket{\pi_0}\to\cdots\to\ket{\pi_\ell}$. \textbf{Bottom row:} each transition $\ket{\widetilde{\Psi}_i}\to\ket{\widetilde{\Psi}_{i+1}}$ uses the qubitized Szegedy walk to build a GQSP-based selective-phase compiler and then uses FPAA to transport the state.}
\label{fig:qsa_architecture}
\end{figure*}

\begin{algorithm}[!t]
\caption{\textsc{QSample} Preparation}
\label{alg:qsample_gqsp}
\KwIn{$\{\mathcal{P}_i,\pi_i,\mathcal{W}_i,\Delta_{\mathcal{W}_i}\}_{i=0}^{\ell}$ on $\Omega$; cooling schedule $\ell$, overlap lower bounds $\lvert\braket{\pi_i}{\pi_{i+1}}\rvert^2 \ge p_i$; initial state $\ket{\pi_0}$; error $\varepsilon$.}
\KwOut{Joint state $\ket{\widetilde{\Psi}_\ell}$ with
$D_{\mathrm{tr}}\!\left(\ket{\widetilde{\Psi}_\ell},\mathcal J\ket{\pi_\ell}\right)
\le \varepsilon$.}
Set per-stage error $\varepsilon_i^{\mathrm{FP}} + 2(L_i-1)\varepsilon_i^{\mathcal{W}} \le \varepsilon/\ell$\;
Initialize $\ket{\widetilde{\Psi}_0}\leftarrow\mathcal J\ket{\pi_0}$\;
\For{$i \leftarrow 0$ \KwTo $\ell-1$}{
    \For{$k \in \{i, i+1\}$}{ $\widetilde{\mathcal{S}}_{k,\phi}(\Pi_0)$ $\leftarrow$\texttt{SelectivePhase}(Prop~\ref{prop:stationary_selective_phase})\;
    }
    $\ket{\widetilde{\Psi}_{i+1}} \leftarrow \texttt{FPAA}$ (Corollary~\ref{cor:gqsp_fpaa})\;
}
\Return $\ket{\widetilde{\Psi}_\ell}$\;
\end{algorithm}

\begin{theorem}
\label{thm:end-to-end}
Let $\{\mathcal{P}_i\}_{i=0}^{\ell}$ be a sequence of slowly varying reversible ergodic Markov chains on a finite state space $\Omega$, with stationary distributions $\{\pi_i\}_{i=0}^{\ell}$, spectral gaps $\{\delta(\mathcal{P}_i)\}_{i=0}^{\ell}$, and overlap lower bounds $\lvert\braket{\pi_i}{\pi_{i+1}}\rvert^2 \ge p_i > 0$ for all $i \in \{0, \ldots,\ell-1\}$. Denote $p_{\min} := \min_{0\le i\le \ell-1} p_i$ and $\Delta_{\min} := \min_{0\le i\le \ell}\Delta_{\mathcal{W}_i}$. Assume $\ket{\pi_0}$ can be prepared exactly and that we have access to oracles $O_{\mathcal{P}_i}$. Then, for any $\varepsilon \in (0,1)$, Algorithm~\ref{alg:qsample_gqsp} outputs $\ket{\widetilde{\Psi}_\ell}$ satisfying $
D_{\mathrm{tr}}\!\left(\ket{\widetilde{\Psi}_\ell},\mathcal J\ket{\pi_\ell}\right)
\le \varepsilon$
using
\begin{equation}
\label{eq:query_complexity_main_sum}
\mathcal{O}\!\left(
\frac{\ell}{\sqrt{p_{\min}}\,\Delta_{\min}}
\log\frac{\ell}{\varepsilon}
\log\!\left(
\frac{\ell}{\varepsilon\sqrt{p_{\min}}}
\log\frac{\ell}{\varepsilon}
\right)
\right)
\end{equation}
queries to $\{O_{\mathcal{P}_i}, O_{\mathcal{P}_i}^\dagger\}_{i=0}^{\ell}$, and one additional qubit in the working register.
\end{theorem}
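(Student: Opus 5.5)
The plan is to prove correctness by induction over the $\ell$ stages of Algorithm~\ref{alg:qsample_gqsp}, using Corollary~\ref{cor:gqsp_fpaa} as the per-stage guarantee. The complexity then follows by multiplying the FPAA length $L_i$ from Theorem~\ref{thm:fpaa-ideal} by the per-factor query cost $4d+2$ from Proposition~\ref{prop:stationary_selective_phase}.

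\textbf{Error accumulation.} Write $\ket{\widetilde{\Psi}_{i+1}}=\widetilde U_i^{\mathrm{FP}}\ket{\widetilde{\Psi}_i}$. Since $\widetilde U_i^{\mathrm{FP}}$ is unitary, trace distance between pure states is unitarily invariant. The triangle inequality then gives
\[
D_{\mathrm{tr}}\bigl(\ket{\widetilde{\Psi}_{i+1}},\mathcal J\ket{\pi_{i+1}}\bigr)
\le D_{\mathrm{tr}}\bigl(\ket{\widetilde{\Psi}_{i}},\mathcal J\ket{\pi_{i}}\bigr)
+D_{\mathrm{tr}}\bigl(\widetilde U_i^{\mathrm{FP}}\mathcal J\ket{\pi_i},\mathcal J\ket{\pi_{i+1}}\bigr).
\]
The second term is at most $2(L_i-1)\varepsilon_i^{\mathcal W}+\varepsilon_i^{\mathrm{FP}}$ by Corollary~\ref{cor:gqsp_fpaa}. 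The base case has error zero because $\ket{\pi_0}$ is prepared exactly. Summing over the stages gives total error at most $\sum_i\bigl(\varepsilon_i^{\mathrm{FP}}+2(L_i-1)\varepsilon_i^{\mathcal W}\bigr)$, which is at most $\varepsilon$ under the per-stage allocation in line 1 of the algorithm.

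\textbf{Parameter choices.} Take $\varepsilon_i^{\mathrm{FP}}=\varepsilon/(2\ell)$. Theorem~\ref{thm:fpaa-ideal} then gives $L_i=\mathcal O\bigl(p_i^{-1/2}\log(\ell/\varepsilon)\bigr)$. Next take $\varepsilon_i^{\mathcal W}=\varepsilon/(4\ell L_i)$, so that $2(L_i-1)\varepsilon_i^{\mathcal W}\le \varepsilon/(2\ell)$. Substituting $L_i$ gives
\[
\log(1/\varepsilon_i^{\mathcal W})=\mathcal O\!\left(\log\!\left(\frac{\ell}{\varepsilon\sqrt{p_i}}\log\frac{\ell}{\varepsilon}\right)\right).
\]

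\textbf{Query count and qubits.} The selective-phase factors in stage $i$ use the walks $\mathcal W_i$ and $\mathcal W_{i+1}$. By Corollary~\ref{cor:w-spectral-filter}, each has GQSP degree $d=\mathcal O(\Delta_{\min}^{-1}\log(1/\varepsilon_i^{\mathcal W}))$. Each stage has $L_i-1$ factors, each costing $4d+2$ queries by Proposition~\ref{prop:stationary_selective_phase}. Stage $i$ therefore costs $\mathcal O(L_i d)$ queries. Bounding $p_i\ge p_{\min}$ and summing over $\ell$ stages gives \eqref{eq:query_complexity_main_sum}. The qubit count is immediate: the working register is $s\otimes A\otimes B$, and the only qubit beyond the $2n_\Omega$ walk qubits is the single signal qubit $s$. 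All selective-phase compilers reuse this qubit, since Proposition~\ref{prop:stationary_selective_phase} returns it to $\ket0$ up to the bounded error already counted.

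\textbf{Main obstacle.} The delicate point is ensuring that Corollary~\ref{cor:gqsp_fpaa} applies even though $\widetilde V_r$ mixes $\mathcal W_i$- and $\mathcal W_{i+1}$-based compilers on the common interface $\mathcal J$. Each compiler is compared to $\mathcal J\mathcal S_\phi(\Pi_{\pi_k})$ on the same embedding, so the telescoping argument goes through regardless of which walk a factor uses. I would verify explicitly that the intertwining $\Pi_0\mathcal T_k=\mathcal T_k\Pi_{\pi_k}$ from Lemma~\ref{lem:stationary-intertwining} is used separately for each $k\in\{i,i+1\}$. The only remaining work is absorbing the circular dependence of $\varepsilon_i^{\mathcal W}$ on $L_i$ into the nested logarithm.
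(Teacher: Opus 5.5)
Your proposal is correct and follows essentially the same route as the paper. It uses the per-stage triangle-inequality induction via Corollary~\ref{cor:gqsp_fpaa} and unitary invariance, the split $\varepsilon_i^{\mathrm{FP}}=\varepsilon_i/2$ with $\varepsilon_i^{\mathcal W}\propto\varepsilon_i/L_i$, and multiplies $L_i$ by the GQSP degree to get the nested logarithm. Your $L_i$ in place of the paper's $L_i-1$ only avoids a degenerate $L_i=1$ case, and the dependence on $L_i$ is not actually circular, since $L_i$ is fixed first from $\varepsilon_i^{\mathrm{FP}}$ and $p_i$ alone.
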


\begin{proof}
For each stage \(i\in\{0,\ldots,\ell-1\}\) choose an error tolerance with $\sum_{i=0}^{\ell-1}\varepsilon_i\leq\varepsilon$, and set $\varepsilon_i^{\mathrm{FP}} := \varepsilon_i/2$ and $\varepsilon_i^{\mathcal{W}} := \varepsilon_i/(4(L_i-1))$. Then from Corollary~\ref{cor:gqsp_fpaa},
\begin{equation}
\begin{aligned}
D_{\mathrm{tr}}\left(
\widetilde U_i^{\mathrm{FP}}\mathcal J\ket{\pi_i},
\mathcal J\ket{\pi_{i+1}}
\right)
&\leq
2(L_i-1)\varepsilon_i^{\mathcal{W}}+\varepsilon_i^{\mathrm{FP}}
\\
&\leq
\varepsilon_i.
\end{aligned}
\end{equation}

We now derive the query complexity. Given the overlap \(|\braket{\pi_i}{\pi_{i+1}}|^2\geq p_i\), FPAA at each stage uses
\begin{equation}
\begin{aligned}
L_i
&=
\mathcal{O}\!\left(
\frac{1}{\sqrt{p_i}}\log\frac{1}{\varepsilon_i^{\mathrm{FP}}}
\right) \\
&=
\mathcal{O}\!\left(
\frac{1}{\sqrt{p_i}}\log\frac{1}{\varepsilon_i}
\right)
\end{aligned}
\end{equation}
number of selective phase compilers, each of which costs
\begin{equation}
\mathcal{O}\!\left(
\frac{1}{
\min\{\Delta_{\mathcal{W}_i},\Delta_{\mathcal{W}_{i+1}}\}
}
\log\frac{1}{\varepsilon_i^{\mathcal{W}}}
\right)
\end{equation}
controlled-\(\mathcal{W}\) operators. Since \(\varepsilon_i^{\mathcal{W}}=\varepsilon_i/(4(L_i-1))\), we have
\begin{equation}
\frac{1}{\varepsilon_i^{\mathcal{W}}}
=
\frac{4(L_i-1)}{\varepsilon_i},
\end{equation}
which gives
\begin{equation}
\frac{1}{\varepsilon_i^{\mathcal{W}}}
=
\mathcal{O}\!\left(
\frac{1}{\varepsilon_i\sqrt{p_i}}
\log\frac{1}{\varepsilon_i}
\right).
\end{equation}
Moreover, using \(p_i\geq p_{\min}\),
\begin{equation}
\log\frac{1}{\varepsilon_i^{\mathcal{W}}}
=
\mathcal{O}\!\left(
\log\left(
\frac{1}{\varepsilon_i\sqrt{p_{\min}}}
\log\frac{1}{\varepsilon_i}
\right)
\right),
\end{equation}
thus the query cost at each stage \(i\) is
\begin{equation}
\mathcal{O}\Biggl(
\frac{1}{
\sqrt{p_i}\min\{\Delta_{\mathcal{W}_i},\Delta_{\mathcal{W}_{i+1}}\}
}
\log\frac{1}{\varepsilon_i}
\log\left(
\frac{1}{\varepsilon_i\sqrt{p_{\min}}}
\log\frac{1}{\varepsilon_i}
\right)
\Biggr).
\end{equation}
Given that \(\varepsilon_i=\varepsilon/\ell\), using \(p_i\geq p_{\min}\) and \(\Delta_{\mathcal{W}_i},\Delta_{\mathcal{W}_{i+1}}\geq\Delta_{\min}\), summing over all \(i=0,\ldots,\ell-1\) requires
\begin{equation}
\mathcal{O}\!\left(
\frac{\ell}{\sqrt{p_{\min}}\Delta_{\min}}
\log\frac{\ell}{\varepsilon}
\log\left(
\frac{\ell}{\varepsilon\sqrt{p_{\min}}}
\log\frac{\ell}{\varepsilon}
\right)
\right)
\end{equation}
queries to the set of oracles \(\{O_{\mathcal{P}_i},O_{\mathcal{P}_i}^\dagger\}_{i=0}^{\ell}\). This implementation uses \(2\lceil\log_2|\Omega|\rceil+1\) qubits, made up of two \(\Omega\)-registers and one additional qubit. Finally, with $\ket{\widetilde\Psi_{i+1}}:=\widetilde U_i^{\mathrm{FP}}\ket{\widetilde\Psi_i}$, the triangle inequality, unitary invariance of $D_{\mathrm{tr}}$, and the per-stage bound from Corollary~\ref{cor:gqsp_fpaa} yield the recursive identity
\begin{equation}
\begin{aligned}
D_{\mathrm{tr}}\!\left(\ket{\widetilde\Psi_{i+1}},\mathcal J\ket{\pi_{i+1}}\right)
&\leq D_{\mathrm{tr}}\!\left(\widetilde U_i^{\mathrm{FP}}\ket{\widetilde\Psi_i},\widetilde U_i^{\mathrm{FP}}\mathcal J\ket{\pi_i}\right) \\
&\quad+ D_{\mathrm{tr}}\!\left(\widetilde U_i^{\mathrm{FP}}\mathcal J\ket{\pi_i},\mathcal J\ket{\pi_{i+1}}\right) \\
&\leq D_{\mathrm{tr}}\!\left(\ket{\widetilde\Psi_i},\mathcal J\ket{\pi_i}\right) + \varepsilon_i.
\end{aligned}
\end{equation}
Since \(\ket{\widetilde\Psi_0}=\mathcal J\ket{\pi_0}\), the initial trace distance is zero, so by induction over \(i=0,\ldots,\ell-1\), the final trace distance is
\begin{equation}
D_{\mathrm{tr}}\left(
\ket{\widetilde{\Psi}_\ell},
\mathcal J\ket{\pi_\ell}
\right)
\leq
\sum_{i=0}^{\ell-1}\varepsilon_i
=
\varepsilon.
\end{equation}
\end{proof}

\begin{table}[!t]
\caption{Comparison of query complexity and ancilla overhead with the prior framework.}
\label{tab:comparison}
\centering
\renewcommand{\arraystretch}{1.3}
\begin{tabular}{@{}lcc@{}}
\toprule
\textbf{Resource} & \textbf{Wocjan et al.~\cite{wocjan_speed-up_2008}} & \textbf{This work} \\
\midrule
Query complexity
& $\widetilde{\mathcal{O}}\!\left(\dfrac{\ell}{p_{\min}\Delta_{\min}}\right)$
& $\widetilde{\mathcal{O}}\!\left(\dfrac{\ell}{\sqrt{p_{\min}}\,\Delta_{\min}}\right)$ \\
System register
& $2\left\lceil\log_2\lvert\Omega\rvert\right\rceil$
& $2\left\lceil\log_2\lvert\Omega\rvert\right\rceil$ \\
Ancilla overhead
& $\mathcal{O}\!\left(\log\dfrac{1}{\Delta_{\min}}\log\dfrac{\ell}{\varepsilon p_{\min}}\right)$
& $1$ \\
\bottomrule
\end{tabular}
\end{table}

\begin{figure*}[!t]
\centering
\includegraphics[width=0.98\textwidth]{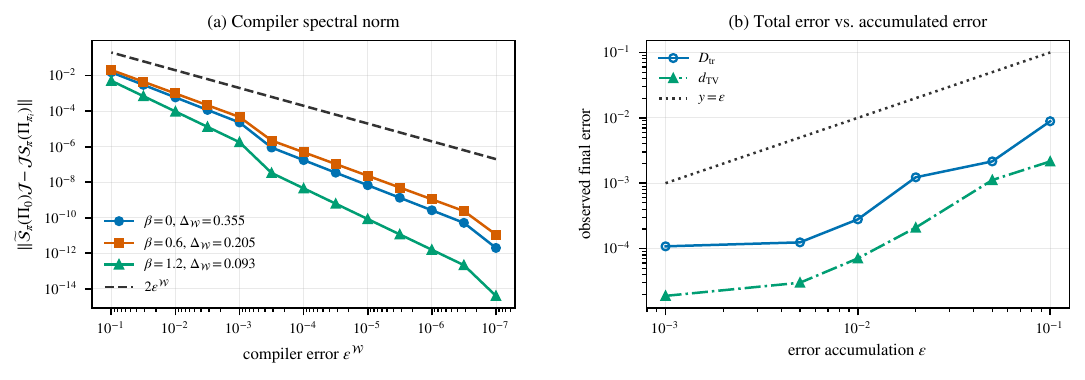}
\caption{Two error plots (a) The spectral norm for three annealing schedules remains below the error threshold \(2\varepsilon^{\mathcal W}\).
(b) Propagation of errors through all operations that make up GQSP-compiled FPAA unitary. Errors accumulated in both trace distance and total variation distance remains below the black dotted line. Code is available at \url{https://github.com/nicholaszhao-19}.} 
\label{fig:numerical-validation}
\end{figure*}

Consequently, measuring register \(A\) of \(\ket{\widetilde\Psi_\ell}\) produces a classical distribution \(\widetilde p_\ell\) that is \(\varepsilon\)-close to the target distribution in total variation,
\begin{equation}
\TV(\widetilde p_\ell, \pi_\ell)
\le D_{\mathrm{tr}}\!\left(
\ket{\widetilde\Psi_\ell},
\mathcal J\ket{\pi_\ell}
\right)
\le \varepsilon,
\end{equation}
and the query complexity is
\begin{equation}
\widetilde{\mathcal{O}}\!\left(\frac{\ell}{\sqrt{p_{\min}\,\delta(\mathcal{P}_{\min})}}\right),
\end{equation}
where $\delta(\mathcal{P}_{\min})$ is the minimum spectral gap across the cooling schedule.


\section{Application: Gibbs \textsc{QSample}}
\label{sec:application}

Given a classical Hamiltonian $H : \Omega_\mu \to \mathbb{R}$, the Gibbs distribution is
\begin{equation}
\label{eq:gibbs-distribution-general}
\mu_\beta(x) := \frac{e^{-\beta H(x)}}{\mathcal{Z}(\beta)},
\end{equation}
where $\beta$ is the inverse temperature and the partition function is $\mathcal{Z}(\beta) = \sum_{x\in\Omega_\mu} e^{-\beta H(x)}$. The Gibbs distribution appears across statistical physics, Bayesian inference \cite{Harrow_2020}, machine learning, and combinatorial optimization.

\begin{corollary}[Gibbs \textsc{QSample} preparation]
\label{cor:gibbs-qsampling}
Let \(H:\Omega_\mu\to\{0,\ldots,n\}\) be a nonnegative Hamiltonian, with the Gibbs distribution defined in \eqref{eq:gibbs-distribution-general}. By construction of \cite{Arunachalam_2022}, there exists a cooling schedule  $\beta_0=0<\cdots<\beta_{\ell}$ of length $\ell=\mathcal{O}(\sqrt{\ln|\Omega_\mu|\,\ln n})$ with overlap $\left|\langle\mu_{\beta_i}|\mu_{\beta_{i+1}}\rangle\right|^2\geq \frac{1}{15}$. Assume for each inverse temperature, there is a reversible Markov chain $\mathcal{P}_{\beta_i}$ and stationary distribution $\mu_{\beta_i}$, with spectral gap $\delta(\mathcal{P}_{\beta_i})$. Define \(\delta_{\min}:=\min_{0\leq i\leq\ell}\delta(\mathcal P_{\beta_i})\) and the associated minimum phase gap \(\Delta_{\min}:=\min_{0\leq i\leq \ell}\Delta_{\mathcal{W}_{\beta_i}}\). Then for any \(\varepsilon\in(0,1)\), given oracle access to $O_{\mathcal{P}_{\beta_i}}$, there exists a quantum algorithm that outputs a joint state \(\ket{\widetilde{\Psi}_{\beta_\ell}}\) satisfying \(D_{\mathrm{tr}}\!\left(\ket{\widetilde{\Psi}_{\beta_\ell}},\mathcal J\ket{\mu_{\beta_\ell}}\right) \le \varepsilon\) with query complexity
\begin{equation}
\widetilde{\mathcal{O}}\!\left(
\frac{\sqrt{\ln|\Omega_\mu|\,\ln n}}{\Delta_{\min}}
\right),
\end{equation}
and uses one additional qubit in the working register.
\end{corollary}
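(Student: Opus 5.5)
The corollary is a direct instantiation of Theorem~\ref{thm:end-to-end}. The plan is to check its hypotheses for the Gibbs family and then substitute the schedule parameters of \cite{Arunachalam_2022} into the query bound \eqref{eq:query_complexity_main_sum}.

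\textbf{Step 1: the hypotheses of Theorem~\ref{thm:end-to-end}.}
\begin{itemize}
\item Take the chains $\mathcal P_i:=\mathcal P_{\beta_i}$. By assumption these are reversible with stationary distributions $\pi_i:=\mu_{\beta_i}$. I would also assume, or note as part of the hypothesis, that each chain is ergodic. Otherwise Lemma~\ref{lem:stationary-intertwining} and the phase gap $\Delta_{\mathcal W_{\beta_i}}$ are not well defined.
\item The initial state is $\ket{\pi_0}=\ket{\mu_0}$. At $\beta_0=0$ this is the uniform superposition $|\Omega_\mu|^{-1/2}\sum_x\ket{x}$. When $|\Omega_\mu|$ is a power of two it is prepared exactly by Hadamards. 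In general it can be prepared exactly with standard techniques, and this uses no queries. So the exact-preparation hypothesis holds.
\item The overlap lower bounds are $p_i=1/15$ for all $i$, taken from the schedule of \cite{Arunachalam_2022}. Hence $p_{\min}=1/15$, a constant.
\end{itemize}

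\textbf{Step 2: substitute into the query bound.} Put $\ell=\mathcal O(\sqrt{\ln|\Omega_\mu|\ln n})$ and $p_{\min}=1/15$ into \eqref{eq:query_complexity_main_sum}. This gives
\[
\mathcal O\!\left(\frac{\sqrt{\ln|\Omega_\mu|\ln n}}{\Delta_{\min}}\log\frac{\ell}{\varepsilon}\log\!\left(\frac{\ell}{\varepsilon}\log\frac{\ell}{\varepsilon}\right)\right).
\]
The two logarithmic factors depend only on $\ell$ and $\varepsilon$. They are absorbed into $\widetilde{\mathcal O}$, which yields the stated complexity. Optionally, using $\Delta_{\min}=\Omega(\sqrt{\delta_{\min}})$, one can also write the bound as $\widetilde{\mathcal O}(\sqrt{\ln|\Omega_\mu|\ln n}/\sqrt{\delta_{\min}})$.

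\textbf{Step 3: output guarantee and space.} The trace-distance guarantee $D_{\mathrm{tr}}(\ket{\widetilde\Psi_{\beta_\ell}},\mathcal J\ket{\mu_{\beta_\ell}})\le\varepsilon$ follows verbatim from Theorem~\ref{thm:end-to-end}. The space usage is also inherited: two $\Omega_\mu$-registers plus the single signal qubit.

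\textbf{Main obstacle.} The only nontrivial point is matching the schedule result of \cite{Arunachalam_2022} to the hypotheses as stated here. Two things need confirming:
\begin{itemize}
\item The overlap guarantee there is the squared overlap of the QSamples, $|\langle\mu_{\beta_i}|\mu_{\beta_{i+1}}\rangle|^2\ge 1/15$. If it is instead stated as a ratio of partition functions, I would convert it using the identity $\langle\mu_\beta|\mu_{\beta'}\rangle=\mathcal Z((\beta+\beta')/2)/\sqrt{\mathcal Z(\beta)\mathcal Z(\beta')}$.
\item The schedule length $\ell$ is deterministic and known in advance, since the per-stage error split $\varepsilon_i=\varepsilon/\ell$ needs it.
\end{itemize}
Beyond this, the argument is substitution.
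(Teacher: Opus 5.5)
Your proposal is correct and follows the paper's proof. Both arguments take $p_{\min}=1/15$ and $\ell=\mathcal{O}(\sqrt{\ln|\Omega_\mu|\,\ln n})$ from Theorem~3.4 of \cite{Arunachalam_2022}, with the paper writing the overlap via exactly the partition-function identity you mention. Both then substitute into Theorem~\ref{thm:end-to-end}, absorb the logarithmic factors into $\widetilde{\mathcal{O}}$, and inherit the two-register-plus-one-qubit space bound. Your extra checks, ergodicity of each $\mathcal{P}_{\beta_i}$ and exact preparation of the uniform superposition at $\beta_0=0$, are hypotheses of Theorem~\ref{thm:end-to-end} that the paper leaves implicit, so making them explicit only tightens the argument.
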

\begin{proof}
Since the Gibbs \textsc{QSample} is \begin{equation}
	\ket{\mu_{\beta_i}}=\sum_{x\in\Omega_\mu}\sqrt{\mu_{\beta_i}(x)}\,\ket{x}=\frac{1}{\sqrt{\mathcal{Z}(\beta_i)}}\sum_{x\in\Omega_\mu}e^{-\beta_i H(x)/2}\ket{x},
\end{equation}
and Theorem~3.4 from~\cite{Arunachalam_2022} obtains the overlap bound
\begin{equation}
\lvert\braket{\mu_{\beta_i}}{\mu_{\beta_{i+1}}}\rvert^2
=
\frac{
\mathcal{Z}\!\left(\frac{\beta_i+\beta_{i+1}}{2}\right)^2
}{
\mathcal{Z}(\beta_i)\,\mathcal{Z}(\beta_{i+1})
}\geq \frac{1}{15},
\end{equation}
meaning the overlap is constant \(p_{\min}=1/15\). Next, there exists a cooling schedule for the Gibbs \textsc{QSample} of length $\ell=\mathcal{O}\!\left(\sqrt{\ln|\Omega_\mu|\,\ln n}\right)$ obtained from Theorem~3.4 of~\cite{Arunachalam_2022}. Now inserting both results into Theorem~\ref{thm:end-to-end} achieves a query complexity of
\begin{align}
&\mathcal{O}\!\left(
\frac{\sqrt{\ln|\Omega_\mu|\,\ln n}}{\Delta_{\min}}\,
\log\!\frac{\sqrt{\ln|\Omega_\mu|\,\ln n}}{\varepsilon}
\right. \nonumber\\
&\qquad\left.
\cdot\log\!\left(
\frac{\sqrt{\ln|\Omega_\mu|\,\ln n}}{\varepsilon}\,
\log\!\frac{\sqrt{\ln|\Omega_\mu|\,\ln n}}{\varepsilon}
\right)
\right) \nonumber\\
&\quad =\widetilde{\mathcal{O}}\!\left(
\frac{\sqrt{\ln|\Omega_\mu|\,\ln n}}{\Delta_{\min}}
\right),
\end{align}
The implementation acts on two $\Omega_\mu$ registers for the original qubitized Szegedy walk, and \textit{one} additional qubit from GQSP.
\end{proof}
It follows that measuring register \(A\) of \(\ket{\widetilde{\Psi}_{\beta_\ell}}\) in the computational basis produces a sample from a classical distribution \(\widetilde{\mu}_{\beta_\ell}\) satisfying
\begin{equation}
d_{\mathrm{TV}}(\widetilde{\mu}_{\beta_\ell},\mu_{\beta_\ell})
\le D_{\mathrm{tr}}\!\left(
\ket{\widetilde{\Psi}_{\beta_\ell}},
\mathcal J\ket{\mu_{\beta_\ell}}
\right)
\le \varepsilon
\end{equation}
using
\begin{equation}
\widetilde{\mathcal{O}}\!\left(
\frac{\sqrt{\ln|\Omega_\mu|\,\ln n}}{\sqrt{\delta_{\min}}}
\right)
\end{equation}
query complexity.

This is a quantum state whose amplitudes encode a classical Gibbs distribution. It is distinct from preparing a general quantum Gibbs state, whose density operator is the Gibbs state of a noncommuting Hamiltonian, as addressed by work done in \cite{ChenKastoryanoBrandaoGilyen2023,ChenKastoryanoGilyen2023,DingLiLin2025}. Those methods use different access models and thus have different complexities.


\section{Discussion}
\label{sec:discussion}

We have presented an improved algorithmic framework for \textsc{QSample} preparation of reversible Markov chains, yielding two improvements over the QPE-based framework. First, the working-register ancilla overhead is reduced from $\mathcal{O}(\log(1/\Delta_{\min})\log(\ell/\varepsilon))$ to one, by replacing the QPE-based reflection subroutine with our GQSP-based selective-phase compiler. Second, the overlap dependence is improved quadratically from $1/p_{\min}$ to $1/\sqrt{p_{\min}}$ by replacing Grover's $\pi/3$-fixed-point search with our compiled version of FPAA. For Gibbs quantum sampling, since the overlap is $1/15$, the improvement is only of a constant factor with regards to query complexity, whereas the most significant improvement is from replacing QPE by GQSP, resulting in the reduction of the ancilla qubits in the working register to $1$. We numerically validated the spectral norm of the selective phase compiler and the final trace distance metric under the GQSP-based FPAA. Finally, applying our method to the Gibbs \textsc{QSample} yields a more qubit-efficient preparation scheme compared to the previous result \cite{wocjan_speed-up_2008, wocjan_quantum_2009}. This has direct impact on reducing the cost of running quantum algorithms for partition-function estimation, since the ancilla qubit in the working register is reduced to one. A natural direction for future work is to determine whether the present framework, a suitable refinement thereof, or a fundamentally different approach can prepare \textsc{QSample}s for the family of non-reversible Markov chains, for which super-quadratic quantum speedups have recently been observed in \cite{Claudon_nonreversible2025}.

\end{document}